\theoremstyle{plain}
\newtheorem{theorem}{Theorem}[section]
\newtheorem{lemma}[theorem]{Lemma}
\newtheorem{corollary}[theorem]{Corollary}
\newtheorem{proposition}[theorem]{Proposition}
\newtheorem*{theorem*}{Theorem}
\newtheorem*{proposition*}{Proposition}
\theoremstyle{definition}
\newtheorem{definition}[theorem]{Definition}
\newtheorem{remark}[theorem]{Remark}
\newtheorem{example}[theorem]{Example}
\newtheorem*{definition*}{Definition}
\newcommand*{\cI}{\mathcal{I}}
\newcommand*{\cX}{\mathcal{X}}
\newcommand*{\cY}{\mathcal{Y}}
\newcommand*{\M}{\mathbb{M}}
\newcommand*{\RR}{\mathbb{R}}
\newcommand*{\NN}{\mathbb{N}}
\newcommand*{\GL}{\mathrm{GL}}
\newcommand*{\rank}{\mathrm{rank}}
\newcommand*{\id}{\mathrm{I}}
\newcommand*{\supp}{\mathrm{supp}}
\newcommand{\ra}{\rightarrow}
\DeclarePairedDelimiterX{\inner}[2]{\langle}{\rangle}{#1, #2}
\DeclareMathOperator{\subnonrank}{Q_{+}}
\DeclareMathOperator{\leqk}{\leqslant_{+}}
\DeclareMathOperator{\asympsubnonrank}{\widetilde{Q}_{+}}
\DeclareMathOperator{\nrank}{R_{+}}
\DeclareMathOperator{\asynrank}{\widetilde{R}_{+}}
\DeclareMathOperator{\rankk}{R}
\DeclareMathOperator{\asymrank}{\widetilde{R}}
\DeclareMathOperator{\subrank}{Q}
\DeclareMathOperator{\asymsubrank}{\widetilde{Q}}
\title{On the Asymptotic Nonnegative Rank of Matrices and its Applications in Information Theory}
\author[1]{Yeow Meng Chee}
\author[2]{Quoc-Tung Le}
\author[1]{Hoang Ta}
\affil[1]{Department of Industrial Systems Engineering and Management, National University of Singapore, Singapore}
\affil[2]{Toulouse School of Economics, Université de Toulouse, France}
\begin{document}
	\maketitle
			\begin{abstract}
				In this paper, we study \emph{the asymptotic nonnegative rank} of matrices, which characterizes the asymptotic growth of the nonnegative rank of fixed nonnegative matrices under the Kronecker product. This quantity is important since it governs several notions in information theory such as the so-called \emph{exact Rényi common information} and the \emph{amortized communication complexity}. By using the theory of asymptotic spectra of V. Strassen (J. Reine Angew. Math. 1988), we define formally the asymptotic spectrum of nonnegative matrices and give a dual characterization of the asymptotic nonnegative rank. As a complementary of the nonnegative rank, we introduce the notion of the subrank of a nonnegative matrix and show that it is exactly equal to the size of the maximum induced matching of the bipartite graph defined on the support of the matrix (therefore, independent of the value of entries). Finally, we show that two matrix parameters, namely rank and fractional cover number, belong to the asymptotic spectrum of nonnegative matrices. 
			\end{abstract}
					
		\section{Introduction}
\label{sec:introduction}
The nonnegative rank of a nonnegative matrix $A$ of size $m \times n$, denoted by $\nrank(A)$, is the smallest integer number $r$ such there exist two nonnegative matrices $B$ and $C$ of size $m \times r$ and $r \times n$ respectively such that $A = BC$ ($BC$ is the usual matrix product between $B$ and $C$). The notion of nonnegative rank can be found in various fields such as combinatorial optimization~\cite{yannakakis1988expressing} and communication complexity~\cite{kushilevitz1997communication,lovasz1990communication,lee2009lower,rao2020communication}. For instances, in two parties communication complexity, two players compute a boolean function $f: \mathcal{X}\times \mathcal{Y} \to \{0,1\}$ on inputs $(x,y) \in \mathcal{X} \times \mathcal{Y}$ where each party holds only one argument (either $x$ or $y$). In this setting, the communication complexity $D(f)$ of the function $f$ is the minimum number of exchanging bits to evaluate $f$. Results in \cite{lovasz1990communication,lee2009lower} show that the logarithm of the nonnegative rank of the associated matrix $M_f$ of size $|\mathcal{X}| \times |\mathcal{Y}|$ defined by $M_f(x,y) = f(x,y)$ is a lower bound for the communication complexity, i.e., $\log \nrank(M_f) \leq D(f)$. Another application of nonnegative rank is in combinatorial optimization. In~\cite{yannakakis1988expressing}, it is shown that the minimum number of linear inequalities representing a given polytope can be determined by the nonnegative rank of a particular matrix. This quantity plays an important role in solving linear programming because the complexity of interior-point methods depends on the number of linear inequalities. Finally,  it is demonstrated in ~\cite{zhang2012quantum,jain2013efficient} that the minimum amount of information exchanged by two parties in order to sample from the bivariate probability distribution $p(x,y) = A_{x,y}$ represented by a matrix $A$ (assuming $A$ is normalized so that $\sum_{x,y}A_{x,y} = 1$) can be determined by the nonnegative rank of $A$.
\par

The asymptotic nonnegative rank, which is defined as the asymptotic growth of the nonnegative rank of a matrix under Kronecker product (see Equation~\eqref{eq:define_asymptotic_nrank} for its mathematical definition), is equally important since it is also related to many quantities in information theory and communication complexity such as the \emph{exact Rényi common information}~\cite{yu2020exact} and the \emph{amortized communication complexity}~\cite{feder1995amortized}. However, in comparison to the nonnegative rank, the asymptotic counterpart is not equally well-studied. 
For this reason, in this paper, we make several contributions to the study of asymptotic nonnegative rank: first, we give a dual characterization of this notion as well as additional results that relate to it via the theory of asymptotic spectra. Historically, this theory was developed by V. Strassen in~\cite{strassen1988asymptotic} (see also the exposition in {\cite[Part I]{wigderson2022asymptotic}}) to understand the arithmetic complexity of matrix multiplication (see, e.g.,~\cite{peterbook}). More recently, the theory of asymptotic spectra has been applied to a range of other problems in combinatorics \cite{zuiddam2019asymptotic_graph}, quantum information \cite{li2020quantum,farooq2023asymptotic}, circuit complexity \cite{robere2022amortized} and other domains \cite{jensen2019asymptotic,fritz2021generalization,Peter_vara_generlized}. Two related notions to the nonnegative rank and the asymptotic nonnegative rank, \emph{the nonnegative subrank} and its asymptotic version, are also introduced and studied. 


\subsection{Notations and Definitions}
\label{subsec:nation}
We define some basic notions that will be used in the rest of this paper. Let $A \in \RR_{+}^{m_A \times n_A},B \in \RR_{+}^{m_B \times n_B}$ be two nonnegative matrices, the \emph{direct sum} $A \oplus B$ and \emph{Kronecker product} $A \otimes B$ are defined respectively by: 
		\begin{align*}
			&A \oplus B \coloneqq \begin{pmatrix}
				A & 0 \\
				0 & B 
			\end{pmatrix} \in \RR_{+}^{(m_A + n_A) \times (m_B + n_B)}, \quad  \\ &A \otimes B \coloneqq \begin{pmatrix}
				A_{1,1}B & \ldots & A_{1,n_A}B\\
				\vdots & \ddots & \vdots\\
				A_{m_A,1}B & \ldots & A_{m_A,n_A}B
			\end{pmatrix} \in \RR_{+}^{m_Am_B \times n_An_B} \, ,
		\end{align*}
		where $A_{i,j}$ is the value of $A$ at the index $(i,j)$. 
		
		For $n \in \NN$, denote $\id_{n}$ the identity matrix of size $n$ and $\GL^+(n):= \{A \in \RR_{+}^{n \times n} \mid A^{-1} \in \RR_{+}^{n \times n}\}$ a subgroup of $\GL(n)$ whose elements and their inverses are nonnegative matrices.
		
		Given a set $S$ equipped with a binary operator $\cdot: S \times S$, we say $(S, \cdot)$ is a \emph{commutative monoid} if:
		\begin{enumerate}[label=(\arabic*)]
			\item The operator $\cdot$ is associative: $(a \cdot b) \cdot c = a \cdot (b \cdot c), \forall a, b, c \in S$.
			\item The operator $\cdot$ is commutative: $a \cdot b = b \cdot a, \forall a, b \in S$.
			\item There exists an identity element $0$: $a \cdot 0 = 0 \cdot a = a, \forall a \in S$.
		\end{enumerate}
		
		For a nonnegative matrix $A$, $\supp(A):= \{(i,j):A_{i,j}>0\}$ the support of $A$ is the set of indices whose coefficients in $A$ is strictly positive. In this work, $\supp(A)$ has two interpretations: either a subset of the index set or a binary matrix of the same size as $A$ whose coefficients at $(i,j) $ are equal to $1$ if $(i,j) \in \supp(A)$ and $0$ otherwise.
  \paragraph*{Asymptotic nonnegative rank}	
		The nonnegative rank is sub-multiplicative under Kronecker product. Indeed, given two nonnegative matrices $A,B$, suppose that $A = UV^{T}$ and $B = XY^{T}$, where $U,V,X,Y$ are nonnegative matrices, then $A\otimes B = (U\otimes X)(V\otimes Y)^{T}$. Therefore $\nrank(A\otimes B) \leq \nrank(A)\nrank(B)$. In general, $\nrank$ is not multiplicative since there exists $A \in \RR_+^{4 \times 4}$ such that $\nrank(A) = 4$ and $\nrank(A \otimes A) = 15 \neq \nrank(A)^2$ \cite{beasley2013communication}. For more similar examples, we refer to \cite{beasley2013communication,vandaele2016heuristics}). 
		Therefore, the \emph{asymptotic nonnegative rank} is interesting to consider. Mathematically, the asymptotic nonnegative rank of nonnegative matrices is defined as follows.
		\begin{equation}
                     \label{eq:define_asymptotic_nrank}
		    \asynrank(A) \coloneqq \lim_{n \ra \infty} \nrank(A^{\otimes n})^{1/n} \, .
		\end{equation}
		Since the sequence $\nrank(A^{\otimes })^{1/n}$ is sub-multiplicative, by Fekete's lemma~\cite{fekete1923verteilung}, we can prove that $\asynrank(A)$ is well-defined and the limit can be replaced by the infimum over $n$. Namely,
		\begin{align*}
			\asynrank(A) = \inf_{n} \nrank(A^{\otimes n})^{1/n} \, .
		\end{align*}
We also note that the approximated version of the asymptotic nonnegative rank was studied in~\cite{braun2017information}. In that work, the authors showed that this notion captures the \emph{common information}~\cite{wyner1975common}.
\subsection{Applications of asymptotic nonnegative rank}

The asymptotic nonnegative rank can be  interpreted as the \emph{exact Rényi common information of order zero}~\cite{yu2020exact}. This information is defined based on correlated sources (joint distribution), which is represented as a nonnegative matrix with normalized rows and columns. Specifically, given correlated sources $\pi_{XY}$ over $X \times Y$. The exact Rényi common information of order zero, denoted by $T_{Ex}^{0}(\pi_{XY})$, is the minimum common randomness rate for exact generation of the target distribution, with the constraint that common randomness can only be compressed using fixed-length codes. In~\cite{yu2020exact}, the authors showed that this information can be characterized by asymptotic nonnegative rank of $\pi_{XY}$. More precisely, they proved $T_{Ex}^{0}(\pi_{XY}) = \log \asynrank(\pi_{XY})$.  For more information about the exact Rényi common information, we refer to~\cite[Section 5 and Section 7]{CIT-122}.

\par 
 In addition, the asymptotic nonnegative rank can be used as a lower bound for the \emph{amortized communication complexity}. Unlike the original communication complexity (that we briefly introduce in the precedent paragraph), this model is interested in the minimum number of exchanging bits required to calculate a boolean function $f$ multiple times (on different inputs ${(x_i, y_i) \mid i = 1, \ldots, m}$). A trivial upper bound for this quantity is $m$ times the complexity of calculating $f$ once. In other words, we obtain this complexity by simply calculating $f(x_i, y_i)$ separately, one after another. However, it turns out that in many computational models, the amortized communication complexity is smaller than this natural upper bound. It implies that the gain of complexity by performing calculation in batch can be significant. A notable example can be found in \cite{feder1995amortized}, where the author constructs a specific function $f$ whose communication complexity $D(f)$ and its amortization $\bar{D}(f)$ are totally different, i.e., $\bar{D}(f)/m<D(f)$. Using the result $\log \nrank(M_f) \leq D(f)$ of \cite{lovasz1990communication}, we can prove a similar amortized counterpart, which states: $\log\asynrank(M_f) \leq \bar{D}(f)/m$ (remind that $M_f$ is of size $|\mathcal{X}| \times |\mathcal{Y}|, M_f(x,y) = f(x,y)$). Therefore, the asymptotic nonnegative rank $\asynrank$ is important for the study of amortized communication complexity.
\subsection{Contribution} In this paper, we consider the asymptotic spectra theory in the context of asymptotic nonnegative rank. Our main contributions are:
		\begin{enumerate}[leftmargin=*]
			\item By introducing a nonnegative matrix semiring $\M_+$ with two operators: $\oplus$ (the direct sum) and $\otimes$ (the Kronecker product) and a \emph{preorder} $\leqk$ (cf. Definition \ref{def:strassenorder}), we define $\mathbf{X}(\M_+,\leqk)$, the \emph{asymptotic spectrum of $\M_+$}. The set $\mathbf{X}(\M_+,\leqk)$ consists of \emph{monotone} semiring homomorphisms $\phi$ from the set of nonnegative matrices $\M_+$ to the set of nonnegative numbers $\RR_{\geq 0}$, i.e., if $A \leqk B$, then $\phi(A) \leq \phi(B)$ ($A, B$ are two nonnegative matrices). In fact, $\mathbf{X}(\M_+,\leqk)$ is the nonnegative matrix version of other existing asymptotic spectra such as the asymptotic spectrum of tensors~\cite{strassen1988asymptotic}, graphs \cite{zuiddam2019asymptotic_graph} and LOCC transformations~\cite{jensen2019asymptotic}, which are studied in the literature. 
			\item The definition of $\mathbf{X}(\M_+,\leqk)$ allows us to establish another characterization of the asymptotic nonnegative rank of a matrix $A$ in Theorem \ref{thm:duality}, which states: 
			\begin{equation*}
				\asynrank(A) = \max_{\phi \in \mathbf{X}(S,\leqk)} \phi(A)
			\end{equation*}
			\item Similar to previous works on asymptotic spectra, we define and study two quantities: the \emph{subrank} ($\subnonrank$) and \emph{asymptotic subrank} ($\asympsubnonrank$) of a nonnegative matrix. While $\subnonrank$ is a natural lower bound of $\nrank$, $\asympsubnonrank$ is the dual of $\asynrank$. In this work, we prove that $\subrank(A)$  is exactly equal to the induced matching number of the bipartite graph defined on the support of $A$ (see Theorem \ref{thm:nonnegative_subrank}) and thus, independent of the coefficients of $A$. Moreover, similar to the asymptotic nonnegative rank, the asymptotic subrank can also be characterized by using $\mathbf{X}(\M_+,\leqk)$ (cf. Theorem \ref{thm:duality}) as:
			\begin{equation*}
				\asympsubnonrank(A) = \min_{\phi \in \mathbf{X}(S,\leqk)} \phi(A)
			\end{equation*}
		\item Since $\mathbf{X}(\M_+,\leqk)$ allows us to characterize precisely $\asynrank$ and $\asympsubnonrank$, we study in detail the set $\mathbf{X}(\M_+,\leqk)$. Our investigation of $\mathbf{X}(\M_+,\leqk)$ consists of showing its two elements: the rank function and the fractional cover number (cf. Section \ref{sec4:elements_spectral_point}). We note that the construction of explicit elements in the asymptotic spectrum of a specific instance is nontrivial. For example, the construction of nontrivial (universal) elements in the asymptotic spectrum of tensors has been an open problem for more than thirty years until it was solved by recent work in~\cite{christandl2018universal}. In addition, we also showed that for all $\phi \in \mathbf{X}(\M_+,\leqk)$, for all \emph{non-degenerate} lower triangular nonnegative matrix $A \in \RR^{m \times n}$, we have: $\phi(A) = \min(m,n)$.
		\end{enumerate}
 \paragraph{Organization} The paper is organized as follows: In Section~\ref{sec2:spectra}, we briefly present the theory of asymptotic spectra of Strassen. We then introduce the asymptotic spectrum on nonnegative matrices and give a dual characterization of the asymptotic nonnegative rank of nonnegative matrices in Section~\ref{sec3:spectrum_nonnegative_matrices}. Finally, in Section~\ref{sec4:elements_spectral_point}, we show two elements in the asymptotic spectrum of nonnegative matrices.

		
		\section{Asymptotic spectra and Strassen's spectral theorem} 
		\label{sec2:spectra}
		In this section, we present briefly the abstract theory of asymptotic spectra which was introduced by V. Strassen~\cite{strassen1988asymptotic}. For a more complete presentation, we refer the readers to~\cite{strassen1988asymptotic,zuiddam2018algebraic,wigderson2022asymptotic}. 
		
		Let $(S,+,\cdot,0_S,1_S)$ be a commutative semiring, i.e., a set $S$ equipped with a binary addition operation $+$, a binary multiplication $\cdot$, and having two elements $0_S,1_S \in S$  such that for all $a,b,c \in S$:
		\begin{enumerate}[label=(\arabic*)]
			\item $(S,+)$ is a commutative monoid with the identity element $0_S$.
			\item $(S,\cdot)$ is a commutative monoid with the identity element $1_S$.
			\item $\cdot$ distributes over $+$: $a \cdot (b+c) = (a\cdot b) + (a\cdot c)$
			\item $0_S \cdot a = 0_S$. 
		\end{enumerate}
		For $n \in \NN$ we denote the sum of $n$ elements $1_S$, i.e., $1_S +\dots +1_S \in S$ by $n_S$. A \emph{preorder} $\leqslant$ on $S$ is a relation on $S$ such that for all $a,b,c \in S$ 
		
		\begin{enumerate}[label=\arabic*)]
			\item $\leqslant$ is reflexive: $a\leqslant a$
			\item $\leqslant$ is transitive: $a\leqslant b$ and $b \leqslant c$ implies $a\leqslant c$.
		\end{enumerate}
		
		\begin{definition}
			\label{def:strassenorder}
			A preorder $\leqslant$ on $S$ is a \emph{Strassen preorder} if
			\begin{enumerate}[label = (\roman*)]
				\item For all $n,m \in \NN$ $n \leq m$ iff $n_S \leqslant m_S$ in $S$.
				\item For all $a,b,c,d \in S$ if $a \leqslant b$ and $c\leqslant d$, then $a+c \leqslant b+d$ and $a \cdot c \leqslant b \cdot d$.
				\item For all $a,b \in S, b\neq 0$ there is $r \in \NN$ such that $a \leqslant r_S \cdot b$. 
			\end{enumerate}
		\end{definition}   
		Let $S$ be a commutative semiring and let $\leqslant$ be a Strassen preorder on $S$. Throughout the paper, we will use $\leq$ to denote the usual order on $\RR$. Let $\RR_{\geq 0}$ be the semiring of nonnegative real numbers. A semiring homomorphism from $S$ to $\RR_{\geq 0}$ (denoted by $\operatorname{Hom}\left(S, \mathbb{R}_{\geq 0}\right)$) is a map $\phi: S \to \RR_{\geq 0}$ such that: 
		\begin{enumerate}[label = (\arabic*)]
			\item $\phi(a \cdot b) = \phi(a)\phi(b)$
			\item $\phi(a + b) = \phi(a)+\phi(b)$
			\item $\phi(1_S) = 1$.
		\end{enumerate}
		Let $\mathbf{X}(S,\leqslant)$ be the set of $\leqslant$-monotone semiring homomorphisms from $S$ to $\RR_{\geq 0}$, i.e:
		\begin{align*}
			\mathbf{X}(S, \leqslant) \coloneqq \left\{\phi \in \operatorname{Hom}\left(S, \mathbb{R}_{\geq 0}\right): \forall a, b \in S, a \leqslant b \Rightarrow \phi(a) \leq \phi(b)\right\} \,,
		\end{align*}
		we note that for every $\phi \in \mathbf{X}(S,\leqslant)$ holds $\phi(1_S) = 1$ and thus $\phi(n_S) = n$ for all $n \in \NN$. We call $\mathbf{X}(S,\leqslant)$ the \emph{asymptotic spectrum} of $(S,\leqslant)$ and elements of $\mathbf{X}(S,\leqslant)$ \emph{spectral points}.

		Let $a \in S$, the rank of $a$ is defined as $\mathrm{R}(a) \coloneqq \min\{n \in \NN: a \leqslant n_S \}$. The subrank of $a$ is defined as $\mathrm{Q}(a) \coloneqq \max \{n \in \NN: n_S \leqslant a\}$. The asymptotic subrank and asymptotic rank of $a$ are defined as
		\begin{equation}
			\label{eq:asymptotic_rank}	
			\asymrank(a) \coloneqq \lim_{N \to \infty} \rankk(a^N)^{1/N} \text{ and } \asymsubrank(a) \coloneqq \lim_{N \to \infty}\subrank(a^N)^{1/N} 
		\end{equation}
		From Fekete's lemma, the limits in \eqref{eq:asymptotic_rank} indeed exist and can be replaced by an infimum and a supremum, that is, 
		
		\begin{align*}
			\asymrank(a) = \inf_{N} \rankk(a^N)^{1/N} \text{, and } \asymsubrank(a) = \sup_{N} \subrank(a^N)^{1/N} \, .
		\end{align*}

			The notions of asymptotic rank and asymptotic subrank have been extensively studied in various fields. For instance, in~\cite{strassen1988asymptotic}, the author showed that the asymptotic rank of matrix multiplication tensors corresponds to the complexity of matrix multiplication, more examples can be found in~\cite{wigderson2022asymptotic}. Beside that, the notion of asymptotic subrank for tensors can be utilized to establish upper bounds for several combinatorial problems, such as capset and sunflower problems~\cite{christandl2018universal}, as well as the Shannon capacity of hypergraphs~\cite{christandl_et_al:LIPIcs.ITCS.2022.48}. Furthermore, research has also focused on exploring structures of the set of nonnegative real numbers obtained from the asymptotic rank or asymptotic subrank of complex tensors, as seen in~\cite{blatter2022countably, christandl2022gap,gesmundo2023next}.	
		
		In~\cite{strassen1988asymptotic}, the author showed that the asymptotic rank and asymptotic subrank have the following dual characterization in terms of the asymptotic spectrum.
		
		\begin{theorem}[{\cite[Th. 3.8]{strassen1988asymptotic}} and {\cite[Cor. 2.14]{zuiddam2018algebraic}}]
			\label{thm:Strassen_asymptotic}
			Let $S$ be a commutative semiring and let $\leqslant$ be a Strassen preorder on $S$. For any $a\in S$ such that $1_S \leqslant a$ and $2_S \leqslant a^k$ for some $k \in \NN$, we have that
			\begin{align*}
				\asymrank(a)  = \max_{\phi \in \mathbf{X}(S,\leqslant)} \phi(a) \text{, and } \asymsubrank(a) = \min_{\phi \in \mathbf{X}(S,\leqslant)}\phi(a) \, .
			\end{align*}
		\end{theorem}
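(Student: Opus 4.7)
The plan is to prove both equalities by splitting into the easy (upper/lower bound) direction and the hard (existence of an extremal homomorphism) direction; the first is algebraic bookkeeping, while the second rests on a Hahn--Banach--style separation, made concrete by the Kadison--Dubois representation theorem for archimedean preordered commutative semirings/rings.

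First I would dispatch the easy direction uniformly. Let $\phi \in \mathbf{X}(S,\leqslant)$. Multiplicativity gives $\phi(a^N) = \phi(a)^N$, and the axiom $\phi(n_S) = n$ plus monotonicity converts rank/subrank inequalities into numerical ones: if $a^N \leqslant r_S$ then $\phi(a)^N \leq r$, and if $s_S \leqslant a^N$ then $s \leq \phi(a)^N$. Applied with $r = \rankk(a^N)$ and $s = \subrank(a^N)$ and taking $N$-th roots, one obtains $\asymsubrank(a) \leq \phi(a) \leq \asymrank(a)$ for every $\phi$, hence $\sup_\phi \phi(a) \leq \asymrank(a)$ and $\inf_\phi \phi(a) \geq \asymsubrank(a)$.

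For the converse, I would aim to produce, for each real $r$ strictly between $\asymsubrank(a)$ and $\asymrank(a)$ (respectively just above $\asymsubrank(a)$ or just below $\asymrank(a)$), a spectral point $\phi$ with $\phi(a) \geq r$ or $\phi(a) \leq r$. The key observation is that the hypotheses $1_S \leqslant a$ and $2_S \leqslant a^k$, together with axiom (iii) of the Strassen preorder, make $a$ an archimedean ``order unit'' in $S$ (every $b \in S$ is bounded by $m_S$, and hence by some power of $a$ times an integer). Passing to the Grothendieck ring of $(S,+)$ localized at powers of $a$, the Strassen preorder extends to an archimedean preorder. The inequality $r < \asymrank(a)$ means that $a^N \not\leqslant \lceil r^N\rceil_S$ for all $N$, which separates $a$ from the ``cone of rank $\leq r$ elements'' in a sense suitable for a separation theorem. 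Kadison--Dubois then yields a monotone ring homomorphism into $\mathbb{R}$, whose restriction lies in $\mathbf{X}(S,\leqslant)$ and sends $a$ to a value $\geq r$. Letting $r \uparrow \asymrank(a)$ and using that $\mathbf{X}(S,\leqslant)$ is compact in the topology of pointwise convergence (each $\phi(b)$ lies in the compact interval $[0,\mathrm{R}(b)]$) turns the supremum into a maximum. The subrank side is entirely symmetric: one separates $a^N$ from $s_S$ for $s > \asymsubrank(a)^N + \epsilon$ and applies the same representation theorem.

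The principal obstacle is setting up the archimedean preordered structure so that Kadison--Dubois actually applies: one must verify that the localization at $a$ does not collapse the order, that the hypotheses $1_S \leqslant a$ and $2_S \leqslant a^k$ are exactly what is needed for archimedeanness (the role of $2_S \leqslant a^k$ being to guarantee that $a$ is bounded below away from $0$ in the localization), and that the cone separating $a^N$ from $\lceil r^N\rceil_S$ is saturated enough to be witnessed by a real character. Once this algebraic framing is in place, the remainder --- integer/real rounding, Fekete-type replacements of limits by infs/sups, and the compactness argument --- is routine.
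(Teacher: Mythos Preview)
The paper does not prove this theorem at all: it is quoted verbatim as a black-box result from Strassen's original paper and Zuiddam's thesis, and then invoked later to derive Theorem~\ref{thm:duality}. So there is no in-paper proof to compare your proposal against.

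That said, your outline is essentially the standard argument from those cited references. The easy direction you give is exactly right. For the hard direction, the route via the Grothendieck ring, an archimedean extension of the preorder, and a Kadison--Dubois/Becker--Schwartz Positivstellensatz is precisely Strassen's method (see also Zuiddam's exposition); the compactness of $\mathbf{X}(S,\leqslant)$ in the pointwise topology is likewise the standard way to upgrade $\sup/\inf$ to $\max/\min$. One minor wobble: your phrasing ``$r < \asymrank(a)$ means $a^N \not\leqslant \lceil r^N\rceil_S$ for all $N$'' needs care with floors versus ceilings (since $\asymrank(a)=\inf_N \rankk(a^N)^{1/N}$, the correct conclusion is $\rankk(a^N)>r^N$, i.e.\ $a^N \not\leqslant \lfloor r^N\rfloor_S$), but this is cosmetic. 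The substantive work you flag as the ``principal obstacle'' --- checking that the localization at $a$ preserves the order and that the hypotheses $1_S\leqslant a$, $2_S\leqslant a^k$ yield archimedeanness --- is indeed where the content lies, and the references carry it out in detail.
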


	The theory of asymptotic spectra, as well as Theorem~\ref{thm:Strassen_asymptotic}, was first applied to study the complexity of the matrix multiplication problem~\cite{strassen1988asymptotic,Str91}. Subsequently, in~\cite{zuiddam2019asymptotic_graph,li2020quantum}, the authors used this theory to introduce the (quantum) asymptotic spectrum of graphs, allowing them to establish a dual characterization for the (quantum) Shannon capacity on graphs. In the context of quantum information theory, \cite{jensen2019asymptotic} used this theory to provide a characterization of the converse error exponents for asymptotic entanglement transformations. In the following section, we will build a commutative semiring for the set of nonnegative matrices in which the theory of asymptotic spectra can be applied. This will allow us to establish a characterization for the asymptotic nonnegative rank.

		\section{Asymptotic spectrum of nonnegative matrices and duality}
		\label{sec3:spectrum_nonnegative_matrices}
		\subsection{Asymptotic spectrum of nonnegative matrices}
		\label{subsec: asymptotic_spectrum_matrices}
		In this section, our goal is to apply the theory of the previous section to give the dual characterization of the asymptotic nonnegative rank of matrices. We remind the readers that $\otimes$ denotes the Kronecker product, $\oplus$ denotes the direct sum of matrices, and $\id_{n}$ denotes the identity matrix of size $n \times n$.     
		
		A natural semiring can be defined by considering the set of nonnegative matrices equipped with the binary operators $\oplus$ and $\otimes$. However, the Kronecker product $\otimes$ is not commutative. Therefore, in the following construction, rather than working with the set of nonnegative matrices, we will work with its \emph{quotient} space where $\oplus$ and $\otimes$ are commutative. In fact, we will introduce an equivalence relation in which equivalent elements share the same nonnegative rank, and the induced operators $\oplus/\otimes$ on the corresponding quotient space are commutative. This allows us to bypass the technical requirement and apply the theory of asymptotic spectra.
		\par
		

		Denote by $\M(m,n)$ the set of all nonnegative matrices of size $m \times n$, for two nonnegative matrices $A,B \in \M(m,n)$, we write $A\cong B$ if there is a pair of matrices $(X,Y) \in \GL^+(m) \times \GL^+(n)$ such that $A = XBY^\top$ (or equivalently $B = X^{-1}A(Y^{-1})^\top$). It is clear that $\cong$ is an equivalence relation since if $A = X_1BY_1^T, B = X_2CY_2^\top$, then $A = (X_1X_2)C(Y_1Y_2)^\top$ with $X_1X_2 \in \GL^+(m), Y_1Y_2 \in \GL^+(n)$. In addition, the relation $\cong$ enjoys other nice properties, as shown in Lemma \ref{lem:congproperty}.
		\begin{lemma}
			\label{lem:congproperty}
			Given four matrices $A, B \in \M(m_1,n_1)$, $C, D \in \M(m_2,n_2)$ such that $A \cong B, C \cong D$, we have:
			\begin{enumerate}[leftmargin = *]
				\item $\nrank(A) = \nrank(B)$ (and $\nrank(C) = \nrank(D)$).
				\item $A \oplus C \cong B \oplus D$.
				\item $A \otimes C \cong B \otimes D$.
			\end{enumerate}
		\end{lemma}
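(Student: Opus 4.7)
The plan is to unpack the definition $A \cong B$, i.e., write $A = X_1 B Y_1^\top$ and $C = X_2 D Y_2^\top$ with $X_1 \in \GL^+(m_1), Y_1 \in \GL^+(n_1), X_2 \in \GL^+(m_2), Y_2 \in \GL^+(n_2)$, and then verify each of the three claims by producing the appropriate change-of-basis matrices explicitly.

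For part (1), I would use the nonnegative factorization directly. If $\nrank(B) = r$ with $B = UV^\top$ for nonnegative $U$ of size $m_1 \times r$ and $V$ of size $n_1 \times r$, then $A = (X_1 U)(Y_1 V)^\top$, and both $X_1 U$ and $Y_1 V$ are nonnegative as products of nonnegative matrices. Hence $\nrank(A) \le \nrank(B)$. Applying the same argument to $B = X_1^{-1} A (Y_1^{-1})^\top$ (using $X_1^{-1}, Y_1^{-1} \in \RR_+$ by definition of $\GL^+$) yields the reverse inequality.

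For parts (2) and (3), I would introduce the candidate transformation pairs $(X_1 \oplus X_2, Y_1 \oplus Y_2)$ and $(X_1 \otimes X_2, Y_1 \otimes Y_2)$ respectively, and check two things: first, a compatibility identity expressing $A \oplus C$ (resp.\ $A \otimes C$) in the required form, and second, membership in $\GL^+$. The compatibility is the standard block-diagonal identity $(X_1 \oplus X_2)(B \oplus D)(Y_1 \oplus Y_2)^\top = (X_1 B Y_1^\top) \oplus (X_2 D Y_2^\top)$ for (2), and the mixed-product rule $(X_1 \otimes X_2)(B \otimes D)(Y_1 \otimes Y_2)^\top = (X_1 B Y_1^\top) \otimes (X_2 D Y_2^\top)$ for (3). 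Membership in $\GL^+$ follows from the fact that $(X_1 \oplus X_2)^{-1} = X_1^{-1} \oplus X_2^{-1}$ and $(X_1 \otimes X_2)^{-1} = X_1^{-1} \otimes X_2^{-1}$, both of which inherit nonnegativity entrywise from $X_1^{-1}$ and $X_2^{-1}$, and analogously for the $Y$'s.

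I do not anticipate any real obstacle here; the lemma is essentially a bookkeeping statement asserting that $\GL^+$ is closed under $\oplus$ and $\otimes$ and that the nonnegative rank is invariant under multiplication by elements of $\GL^+$ on both sides. The only point worth stating carefully is that the inverses behave well under $\oplus$ and $\otimes$, so that the constructed transformations genuinely lie in $\GL^+(m_1 + m_2), \GL^+(n_1 + n_2)$ and $\GL^+(m_1 m_2), \GL^+(n_1 n_2)$ respectively.
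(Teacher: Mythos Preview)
Your proposal is correct and follows essentially the same approach as the paper: write $A = X_1 B Y_1^\top$, $C = X_2 D Y_2^\top$ with the $X_i, Y_i \in \GL^+$, and then use the block-diagonal and mixed-product identities with the pairs $(X_1 \oplus X_2, Y_1 \oplus Y_2)$ and $(X_1 \otimes X_2, Y_1 \otimes Y_2)$. If anything, your version is slightly more careful than the paper's, which for part (1) simply asserts that $\nrank$ is preserved ``since $X,Y$ are invertible matrices'' without spelling out the two-sided inequality, and for parts (2)--(3) does not explicitly verify that $X_1 \oplus X_2$, $X_1 \otimes X_2$, etc., lie in $\GL^+$ via the inverse formulas you mention.
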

		\begin{proof}
			Since $A \cong B$, so there are $X \in \GL^{+}(m_1)$ and $Y \in \GL^{+}(n_1)$ such that $XAY^{T} = B$. This implies that $\nrank(A) = \nrank(B)$ since $X,Y$ are invertible matrices. 
			
			Since $C \cong D$, so there are $U \in \GL^{+}(m_2)$ and $V \in \GL^{+}(n_2)$ such that $UCV^{T} = D$. We have $(X \oplus U)(A\oplus C)(Y\oplus V)^{T} = B \oplus D$, therefore $A \oplus C \cong B \oplus D$ since $X \oplus U \in \GL^{+}(m_1 +m_2)$ and $Y \oplus V \in \GL^{+}(n_1 +n_2)$. Similarly, one has $(X \otimes U)(A \otimes C)(Y \otimes V)^{T} = B \otimes D$, this implies $A \otimes C \cong B \otimes D$. 
			
		\end{proof}
		
		Let $A \in M(m_A,n_A)$ and $B \in M(m_B,n_B)$, we say $A$ and $B$ are \emph{equivalent}, and write $A \sim B$, if there are zero matrices $A_0 = 0 \in \RR_{+}^{a_1 \times a_2}$ and $B_0 = 0 \in \RR_{+}^{b_1\times b_2}$ such that $A\oplus A_0 \cong B \oplus B_0$. In fact, $\sim$ is also an equivalence relation and elements in the same equivalence class shares a common nonnegative rank, as shown below.
		\begin{lemma}
			\label{lem:equivrelation}
			The binary relation $\sim$ is an equivalence one. Moreover, the nonnegative rank is invariant under $\sim$.
		\end{lemma}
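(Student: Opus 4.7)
The plan is to verify the three axioms of an equivalence relation (reflexivity, symmetry, transitivity) and then to deduce $\nrank$ invariance from Lemma \ref{lem:congproperty} together with the elementary fact that $\nrank(A \oplus 0) = \nrank(A)$. Reflexivity and symmetry are essentially immediate: for reflexivity, take both auxiliary zero matrices to be a common zero block (say $0 \times 0$), and use $A \cong A$; symmetry of $\sim$ is inherited directly from symmetry of $\cong$, which holds because $\GL^+(n)$ is a group and hence $(X,Y) \in \GL^+(m) \times \GL^+(n)$ has an inverse lying in the same product of groups.

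The substantive case is transitivity. Suppose $A \sim B$ via $A \oplus A_0 \cong B \oplus B_0$ and $B \sim C$ via $B \oplus B_1 \cong C \oplus C_0$, where $A_0, B_0, B_1, C_0$ are zero matrices of appropriate sizes. By part~3 of Lemma \ref{lem:congproperty} (applied with the trivial congruence $Z \cong Z$ for a zero block $Z$, which is a special case of part~2 as well), pad each congruence by the missing zero block on both sides to obtain $A \oplus A_0 \oplus B_1 \cong B \oplus B_0 \oplus B_1$ and $B \oplus B_1 \oplus B_0 \cong C \oplus C_0 \oplus B_0$. The crucial observation is that $B_0 \oplus B_1$ and $B_1 \oplus B_0$ are both the zero matrix of the same outer dimensions, and hence are \emph{literally equal} as matrices; therefore $B \oplus B_0 \oplus B_1 = B \oplus B_1 \oplus B_0$, and chaining the two congruences yields $A \oplus (A_0 \oplus B_1) \cong C \oplus (C_0 \oplus B_0)$, with both added blocks being zero matrices. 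This establishes $A \sim C$.

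For invariance of nonnegative rank, I combine two ingredients. First, by Lemma \ref{lem:congproperty}, $\cong$ preserves $\nrank$. Second, for any nonnegative matrix $A$ one has $\nrank(A \oplus 0) = \nrank(A)$: any nonnegative factorization $A = UV^\top$ gives one of $A \oplus 0$ of the same inner width by appending zero rows to $U$ and $V$; conversely, if $A \oplus 0 = \tilde U \tilde V^\top$, taking the restrictions of $\tilde U$ and $\tilde V$ to the row indices corresponding to $A$ produces a nonnegative factorization of $A$ of the same width. Combining both facts, if $A \sim B$ via $A \oplus A_0 \cong B \oplus B_0$, then $\nrank(A) = \nrank(A \oplus A_0) = \nrank(B \oplus B_0) = \nrank(B)$. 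The only subtlety is the bookkeeping in transitivity, which hinges on the convenient fact that the arrangement of zero blocks inside a direct sum is immaterial; beyond this, the proof is routine.
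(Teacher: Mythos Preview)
Your proof is correct and follows essentially the same route as the paper's: both dismiss reflexivity and symmetry quickly, prove transitivity by padding each $\cong$-relation with the missing zero block via Lemma~\ref{lem:congproperty} and then using that $B_0\oplus B_1 = B_1\oplus B_0$ as zero matrices of identical size, and deduce $\nrank$-invariance from $\nrank(A\oplus 0)=\nrank(A)$ together with part~1 of Lemma~\ref{lem:congproperty}. The only slip is that the padding step uses part~2 (compatibility with $\oplus$) of Lemma~\ref{lem:congproperty}, not part~3 (compatibility with $\otimes$).
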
 
		\begin{proof}
			We first prove that $\sim$ is an equivalence class. Since the reflexivity and symmetry of $\sim$ is trivial, we focus on proving the transitivity of $\sim$, i.e, for three matrices $A, B, C$, if $A \sim B, B \sim C$, then $A \sim C$. Indeed, by definition of $\sim$, there exists zero matrices $A_0, B_0, B_1, C_0$ such that:
			\begin{equation*}
				A \oplus A_0 \cong B \oplus B_0, \qquad B \oplus B_1 \cong C \oplus C_0
			\end{equation*}
			Therefore, 
			\begin{equation*}
				(A \oplus A_0) \oplus B_1 \cong (B \oplus B_0) \oplus B_1, \qquad (B \oplus B_1) \oplus B_0 \cong (C \oplus C_0) \oplus B_0,
			\end{equation*}
			using Lemma \ref{lem:congproperty}. Moreover, $B_0 \oplus B_1 = B_1 \oplus B_0$ since both are zero matrix of the same sizes. This implies $B \oplus B_1 \oplus B_0 = B \oplus B_0 \oplus B_1$. Using the fact that $\cong$ is an equivalence relation, we have:
			\begin{equation*}
				A \oplus (A_0 \oplus B_1) \cong C \oplus (C_0 \oplus B_0)
			\end{equation*}
			Notice that both $A_0 \oplus B_1$ and $C_0 \oplus B_0$ are zero matrices. This proves the first claim.
			
			Moreover, the direct sum of a matrix with a zero matrix does not affect the nonnegative rank. Therefore, if $A \sim B$, then $\nrank(A) = \nrank(A \oplus A_0) = \nrank(B \oplus B_0) = \nrank(B)$. This concludes the proof. 
		\end{proof}
		
		Let $\M_{+}$ be the set of $\sim$-equivalence classes of nonnegative matrices (which is now well-defined thanks to Lemma \ref{lem:equivrelation}) , we define the addition and multiplication on $\M_{+}$ as follow
		\begin{align*}
			[A] \oplus [B] &= [A\oplus B] \\
			[A] \otimes [B] &= [A \otimes B] \, .
		\end{align*} 
		where $[A]$ denotes the equivalence class of the matrix $A$. These two operations are also well-defined, as shown in the following lemma.
		\begin{lemma}
			\label{lem:welldefinedoperator}
			Consider two pairs of matrices $(A,B)$ and $(A', B')$ such that $A \sim A'$ and $B \sim B'$, then:
			\begin{align*}
				[A \oplus B] &= [A' \oplus B']\\
				[A \otimes B] &= [A' \otimes B']\qedhere
			\end{align*}
		\end{lemma}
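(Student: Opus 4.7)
My plan is to start directly from witnesses of the hypotheses $A \sim A'$ and $B \sim B'$, apply Lemma~\ref{lem:congproperty} to combine them, and then use two elementary rearrangement facts to land on the desired $\sim$-equivalence. The two helper facts are: (i) direct sum is commutative up to $\cong$, since for any nonnegative matrices $P, Q$ there exist permutation matrices $R \in \GL^+(m_P+m_Q)$, $S \in \GL^+(n_P+n_Q)$ (every permutation matrix is nonnegative with nonnegative inverse, so it lies in $\GL^+$) such that $R(P \oplus Q)S^\top = Q \oplus P$; and (ii) Kronecker product distributes over direct sum in both arguments, where the left-distributivity $(P \oplus Q) \otimes R = (P \otimes R) \oplus (Q \otimes R)$ is an actual equality while the right-distributivity $P \otimes (Q \oplus R) \cong (P \otimes Q) \oplus (P \otimes R)$ holds only up to a row/column permutation that de-interleaves the blocks. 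Both permutations again lie in $\GL^+$.

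\textbf{Direct-sum case.} Pick zero matrices $A_0, A_0'$ and $B_0, B_0'$ witnessing $A \sim A'$ and $B \sim B'$ respectively, so that $A \oplus A_0 \cong A' \oplus A_0'$ and $B \oplus B_0 \cong B' \oplus B_0'$. By Lemma~\ref{lem:congproperty},
\[
(A \oplus A_0) \oplus (B \oplus B_0) \;\cong\; (A' \oplus A_0') \oplus (B' \oplus B_0').
\]
Applying helper fact (i) to swap the middle two blocks on each side (via a single block permutation conjugation, which lies in $\GL^+$), I obtain
\[
(A \oplus B) \oplus (A_0 \oplus B_0) \;\cong\; (A' \oplus B') \oplus (A_0' \oplus B_0').
\]
Since $A_0 \oplus B_0$ and $A_0' \oplus B_0'$ are zero matrices, this is exactly the definition of $A \oplus B \sim A' \oplus B'$, i.e.\ $[A \oplus B] = [A' \oplus B']$.

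\textbf{Kronecker-product case.} Starting from the same witnesses, Lemma~\ref{lem:congproperty} now gives
\[
(A \oplus A_0) \otimes (B \oplus B_0) \;\cong\; (A' \oplus A_0') \otimes (B' \oplus B_0').
\]
Expanding the left-hand side using helper fact (ii) (first left-distributivity, then right-distributivity applied to each summand, picking up permutation conjugations which are in $\GL^+$), I get
\[
(A \oplus A_0) \otimes (B \oplus B_0) \;\cong\; (A \otimes B) \oplus (A \otimes B_0) \oplus (A_0 \otimes B) \oplus (A_0 \otimes B_0).
\]
Because $A_0$ and $B_0$ are zero matrices, the last three summands are zero matrices, and their direct sum $Z$ is also a zero matrix. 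An analogous expansion on the right-hand side yields a zero matrix $Z'$, and composing the $\cong$-relations gives $(A \otimes B) \oplus Z \cong (A' \otimes B') \oplus Z'$, which is precisely $A \otimes B \sim A' \otimes B'$.

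\textbf{Anticipated obstacle.} The argument is almost purely bookkeeping; the only real point that needs care is the right-distributivity of $\otimes$ over $\oplus$, which is not a literal identity but holds only up to a perfect-shuffle permutation of rows and columns. One must check that this shuffle is a permutation matrix (hence in $\GL^+$) and that the resulting conjugation intertwines with the conjugations coming from Lemma~\ref{lem:congproperty}. Beyond this, the proof is a direct combination of Lemma~\ref{lem:congproperty}, block permutations, and the fact that Kronecker products involving a zero matrix are zero.
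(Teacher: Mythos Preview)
Your proposal is correct and follows essentially the same approach as the paper's proof: pick zero-matrix witnesses for the two $\sim$-equivalences, push them through Lemma~\ref{lem:congproperty}, and then rearrange blocks via permutation conjugations to extract the desired $\sim$-equivalence. The only cosmetic difference is in the Kronecker case: the paper handles the failure of right-distributivity by first swapping factors via $P \otimes Q \cong Q \otimes P$ and then applying the exact left-distributivity $(P \oplus Q)\otimes R = (P\otimes R)\oplus (Q\otimes R)$, whereas you invoke right-distributivity directly up to a perfect-shuffle permutation; both amount to the same $\GL^+$ conjugation.
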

		\begin{proof}
			Due to the definition of the equivalence relation $\sim$, there exists zero matrices $A_0,B_0,A'_0,B'_0$ such that:
			\begin{align*}
				A \oplus A_0 \cong A' \oplus A'_{0}, \qquad B \oplus B_0 \cong B' \oplus B'_0 \, .
			\end{align*}
			By Lemma \ref{lem:congproperty}, we have:
			\begin{align}
				(A \oplus A_0) \oplus (B \oplus B_0) \cong (A' \oplus A'_0) \oplus (B' \oplus B'_0) \label{eq:pluscongwelldefine}\\
				(A \oplus A_0) \otimes (B \oplus B_0) \cong (A' \oplus A'_0) \otimes (B' \oplus B'_0) \label{eq:timecongwelldefine}
			\end{align}
			\begin{enumerate}[leftmargin=*]
				\item For addition $(\oplus)$: The key observation for this part is $(A \oplus A_0) \oplus (B \oplus B_0) \cong (A \oplus B) \oplus (A_0 \oplus B_0)$. Indeed, we have:
				\begin{equation*}
					(A \oplus A_0) \oplus (B_0 \oplus B_0) = \begin{pmatrix}
						A & 0 & 0 & 0\\
						0 & A_0 & 0 & 0\\
						0 & 0 & B & 0\\
						0 & 0 & 0 & B_0
					\end{pmatrix}, (A \oplus A_0) \oplus (B_0 \oplus B_0) = \begin{pmatrix}
						A & 0 & 0 & 0\\
						0 & B & 0 & 0\\
						0 & 0 & A_0 & 0\\
						0 & 0 & 0 & B_0
					\end{pmatrix}
				\end{equation*}
				Therefore, the matrix $(A \oplus B) \oplus (A_0 \oplus B_0)$ can be obtained by permuting rows and columns of $	(A \oplus A_0) \oplus (B_0 \oplus B_0)$. This implies there exists two permutation matrices $X, Y$ such that $X((A \oplus A_0) \oplus (B_0 \oplus B_0))Y^\top = (A \oplus B) \oplus (A_0 \oplus B_0)$. Similarly, we also have: $(A' \oplus A'_0) \oplus (B' \oplus B'_0) \cong (A' \oplus B') \oplus (A'_0 \oplus B'_0)$. Thus, we have $(A \oplus B) \oplus (A_0 \oplus B_0) \cong (A' \oplus B') \oplus (A'_0 \oplus B'_0)$ (by combining with \eqref{eq:pluscongwelldefine}). The proof is concluded by noticing that $A_0 \oplus B_0$ and $A_0' \oplus B'_0$ are zero matrices.
				
				\item For multiplication $(\otimes)$: we have: 
				\begin{align*}
					(A \oplus A_0) \otimes (B \oplus B_0) &= [A \otimes (B \oplus B_0)] \oplus [A_0 \otimes (B \oplus B_0)]\\
					&\cong [(B \oplus B_0) \otimes A] \oplus [(B \oplus B_0) \otimes A_0]   
					\\ &\cong (A \otimes B) \oplus (A \otimes B_0 \oplus A_0 \otimes B  \oplus A_0 \otimes B_0)
				\end{align*}
				Here we use the fact $(A\oplus B)\otimes C = (A \otimes C) \oplus (B \otimes C)$  and $A \otimes B \cong B \otimes A$ (note that $A \otimes (B \oplus C) \neq (A \otimes B) \oplus (A \otimes C)$ in general). Similarly, we have  
				\begin{align*}
					(A' \oplus A'_0) \otimes (B' \oplus B'_0) \cong (A' \otimes B') \oplus (A' \otimes B'_0 \oplus A'_0 \otimes B'  \oplus A'_0 \otimes B'_0) \, .
				\end{align*}

				Note that the second terms of both above equations are zero matrices. Combining with \eqref{eq:timecongwelldefine}, we deduce that $[A \otimes B] = [A' \otimes B']$. \qedhere
			\end{enumerate}
		\end{proof}

		From above arguments, we conclude that the elements $[A] \oplus [B]$ and $[A] \otimes [B]$ does not depend on the choice of representative $A,B$. Hence, these two operations are well-defined in the quotient space $\M_{+}$. Moreover, one can easy to verify that $[A \oplus B] = [B \oplus A], [A \otimes B] = [B \otimes A]$. Thus, $(\M_{+},\oplus,\otimes,[0], [\id_{1}])$ is a commutative semiring where $[0]$ is the equivalence class of all zero matrices. 
		\begin{lemma}
			The set $\M_{+}$ with addition $\oplus$, multiplication $\otimes$, additive unit $[0]$ and multiplicative unit $[\id_{1}]$ is commutative semiring.
		\end{lemma}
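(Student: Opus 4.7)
The plan is to verify each of the four semiring axioms directly on the quotient $\M_+$, using the already-established well-definedness of $\oplus$ and $\otimes$ on equivalence classes (Lemma~\ref{lem:welldefinedoperator}) together with the key observation that every permutation matrix belongs to $\GL^+$, so conjugation by a permutation pair preserves $\cong$ (and hence $\sim$).

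First, I would handle the monoid structure of $(\M_+,\oplus)$. Associativity is a literal matrix identity, $(A\oplus B)\oplus C = A\oplus(B\oplus C)$, so it descends. For commutativity, $A\oplus B$ is obtained from $B\oplus A$ by swapping the first $m_A$ rows with the last $m_B$ rows and doing the analogous column swap; both swaps are realized by permutation matrices in $\GL^+(m_A+m_B)$ and $\GL^+(n_A+n_B)$, giving $A\oplus B\cong B\oplus A$, hence $[A]\oplus[B]=[B]\oplus[A]$. For the identity, choosing a $0\times 0$ zero block yields the literal equality $A\oplus 0 = A$, so $[A]\oplus[0]=[A]$ (and by $\sim$ this is independent of the size of the representative of $[0]$).

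Next, I would repeat the analysis for $(\M_+,\otimes)$. Associativity again holds as a literal matrix identity. Commutativity uses the classical fact that $A\otimes B$ and $B\otimes A$ are related by row and column permutations (the perfect-shuffle permutations), and since these lie in $\GL^+$ we get $A\otimes B\cong B\otimes A$, hence $[A]\otimes[B]=[B]\otimes[A]$. The identity $[\id_1]$ works because $A\otimes \id_1 = A = \id_1\otimes A$. Finally, $[0]\otimes[A]=[0\otimes A]=[0]$ since $0\otimes A$ is a zero matrix.

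The remaining axiom is distributivity, and this is the only step requiring real attention. The right-distributive law $(A\oplus B)\otimes C = (A\otimes C)\oplus(B\otimes C)$ holds as a genuine matrix equality from the definition of the Kronecker product. The left-distributive law $A\otimes(B\oplus C)$, however, is not literally equal to $(A\otimes B)\oplus(A\otimes C)$: the two block patterns differ by a permutation of rows and columns, which is already made explicit in the proof of Lemma~\ref{lem:welldefinedoperator} (using $A\otimes B\cong B\otimes A$ to swap the factors, applying the true right-distributive law, then swapping back). Since each such permutation lies in $\GL^+$, we obtain $A\otimes(B\oplus C)\cong(A\otimes B)\oplus(A\otimes C)$, which gives the desired equality in $\M_+$. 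The main (and only) conceptual obstacle is precisely that strict left-distributivity fails at the matrix level and is only recovered after quotienting by $\sim$; once that point is noted, the remaining verifications are routine.
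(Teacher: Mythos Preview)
Your proof is correct and follows the same approach as the paper: direct verification of the semiring axioms using that permutation matrices lie in $\GL^+$. The paper itself does not give a detailed proof of this lemma---it merely remarks before the statement that ``one can easy to verify that $[A\oplus B]=[B\oplus A]$, $[A\otimes B]=[B\otimes A]$'' and declares the result---so your write-up is in fact more thorough than what appears there, particularly in isolating left-distributivity as the only axiom that genuinely fails at the matrix level and must be recovered via $\cong$.
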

		
		We write $A \leqk B$ if there are two nonnegative matrices $X \in \RR_{+}^{m_A \times m_B}$ and $Y \in \RR_{+}^{n_A \times n_B}$ such that $A = XBY^{T}$. This preorder ensures that if $A \leqk B$, then $\nrank(A) \leq \nrank(B)$. Note that $\leqk$ is less restrictive than $\cong$ (which requires $X \in \GL^+(m_A), Y \in \GL^+(n_A)$). This also implies that if $A \cong B$, then $A \leqk B$ and $B \leqk A$. The natural induced preorder of $\leqk$ on $\M_{+}$ is also well-defined $\M_{+}$, which is shown as follow:
		\begin{lemma}
			\label{lem:welldefineorder}
			If $A \leqk B$, then $A' \leqk B'$, for all $A' \sim A, B' \sim B$.
		\end{lemma}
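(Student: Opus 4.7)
The plan is to establish three elementary properties of $\leqk$ and then chain them through the hypotheses. The first property is that $A \cong B$ implies both $A \leqk B$ and $B \leqk A$, which is immediate since $\GL^+(n) \subseteq \RR_{+}^{n \times n}$. The second property is that direct-summing with a zero matrix is invisible to $\leqk$; that is, $M \leqk M \oplus M_0$ and $M \oplus M_0 \leqk M$ for any zero matrix $M_0$. The witnesses in both directions are nonnegative ``inclusion'' matrices of the form $(I \; 0)$ of the appropriate shape, and a direct block computation verifies the identities. The third property is additivity of $\leqk$ under $\oplus$: from $A_i = X_i B_i Y_i^{\top}$ for $i = 1, 2$ one obtains $A_1 \oplus A_2 = (X_1 \oplus X_2)(B_1 \oplus B_2)(Y_1 \oplus Y_2)^{\top}$, with the block-diagonal factors still nonnegative.

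With these in hand, I would unfold $A' \sim A$ as $A \oplus A_0 \cong A' \oplus A'_0$ and $B' \sim B$ as $B \oplus B_0 \cong B' \oplus B'_0$ for some zero matrices $A_0, A'_0, B_0, B'_0$. Noting that any two zero matrices (of any shapes for which the matrix products are defined) are related by $\leqk$ via the zero witness, I would then write the chain
\begin{equation*}
A' \;\leqk\; A' \oplus A'_0 \;\cong\; A \oplus A_0 \;\leqk\; B \oplus A_0 \;\leqk\; B \oplus B_0 \;\cong\; B' \oplus B'_0 \;\leqk\; B' .
\end{equation*}
Here the outer $\leqk$ steps are instances of the second property, the $\cong$ steps produce $\leqk$ by the first property, and the two inner $\leqk$ steps are applications of additivity: the first combines $A \leqk B$ with the reflexive $A_0 \leqk A_0$, and the second combines $B \leqk B$ with $A_0 \leqk B_0$. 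Transitivity of the preorder then gives $A' \leqk B'$, which is the conclusion.

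There is no conceptual obstacle; the entire argument is bookkeeping, and the only thing to watch is dimension matching in the block constructions. The one mildly delicate point is the step $B \oplus A_0 \leqk B \oplus B_0$, where the two zero blocks may have completely different shapes; this is harmless because the required witness for $A_0 \leqk B_0$ can simply be taken to be a pair of zero matrices of the correct rectangular dimensions, after which additivity carries through verbatim.
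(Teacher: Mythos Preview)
Your proof is correct and uses essentially the same ingredients as the paper: that $\cong$ implies $\leqk$ in both directions, and that direct-summing a zero matrix is invisible to $\leqk$ (with the same $(I\;0)$ witnesses). The paper organizes the argument a touch more economically by first proving the standalone claim ``$A \sim A'$ implies $A \leqk A'$ and $A' \leqk A$'' and then chaining $A' \leqk A \leqk B \leqk B'$; this avoids your additivity step $B \oplus A_0 \leqk B \oplus B_0$ entirely, so Property~3 is not actually needed.
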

		\begin{proof}
			It is sufficient to show that if $A \sim A'$, then $A \leqk A', A' \leqk A$. Indeed, combining with the transitivity of $\leqk$, we can conclude:
			\begin{equation*}
				A' \leqk A \leqk B \leqk B'
			\end{equation*}
			By definition of $\sim$, there exists two zero matrices $A_0, A'_0$ such that $A \oplus A_0 \cong A' \oplus A'_0$. Assume that $A \in \RR_{+}^{m\times n}$ and $A_0 = 0 \in \RR_{+}^{p\times q}$. Define matrices $X = \left[\begin{smallmatrix} \id_{m} & 0 \end{smallmatrix}\right] \in \RR_{+}^{m \times (m+p)}$ and $Y = \left[\begin{smallmatrix} \id_{n} & 0 \end{smallmatrix}\right] \in \RR_{+}^{n \times (n+q)}$. We have, $A = X(A \oplus A_0)Y^{T}$ and $A \oplus A_0 = X^{T}AY$. Therefore,
			\begin{align*}
				A \leqk A &\oplus A_0 \\
				A \oplus A_0 &\leqk A \,. 
			\end{align*}   
			Similarly, we also have: $	A' \leqk A' \oplus A'_0, A' \oplus A'_0 \leqk A'$. As such, $A \leqk A \oplus A_0 \cong A' \oplus A'_0 \leqk A'$. The other direction can be proved similarly. This yields the desired result.
		\end{proof}
		Hereinafter, to ease the notation, we abuse the notations by writing $A$ to denote both the matrix $A$ and also the equivalence class $[A]$. The meaning should be clear from the context. 
		
		The following lemma proves that $\leqk$ is also a Strassen preorder.
		
		\begin{lemma}
			The preorder $\leqk$ is a Strassen preorder on $\M_{+}$. That is, for nonnegative matrices $A,B,C,D \in \M_{+}$ we have the following.
			\begin{enumerate}[label=(\roman*)]
				\item { \label{eqq: state1}
					For all $n,m \in \NN$, $n \leq m$ if and only if $\id_{n} \leqk \id_{m}$
				}
				\item { \label{eqq: state2} 
					If $A \leqk B$ and $C \leqk D$ then $A \oplus C \leqk B \oplus D$ and $A\otimes C \leqk B \otimes D$
				}
				\item { \label{eqq: state3}
					If $B \neq 0$, there is an $r \in \NN$ such that $A \leqk \id_{r} \otimes B$.}  
				
			\end{enumerate} 
		\end{lemma}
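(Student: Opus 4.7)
By Lemma~\ref{lem:welldefineorder}, the three conditions can be verified on any choice of representative matrices, so I will argue throughout at the matrix level. The proof splits naturally into the three parts; \ref{eqq: state1} and \ref{eqq: state2} are routine verifications, while \ref{eqq: state3} is the only one requiring an explicit construction.

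For \ref{eqq: state1}, if $n \leq m$ I take $P := [\,\id_n \mid 0\,] \in \RR_{+}^{n \times m}$ so that $P \id_m P^{\top} = \id_n$, giving $\id_n \leqk \id_m$. For the converse I use that $A = X B Y^{\top}$ with $X, Y \geq 0$ forces $\nrank(A) \leq \nrank(B)$, combined with $\nrank(\id_k) = k$.

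For \ref{eqq: state2}, given witnesses $A = X_1 B Y_1^{\top}$ and $C = X_2 D Y_2^{\top}$, I plan to use the identities
\begin{align*}
(X_1 \oplus X_2)(B \oplus D)(Y_1 \oplus Y_2)^{\top} &= A \oplus C, \\
(X_1 \otimes X_2)(B \otimes D)(Y_1 \otimes Y_2)^{\top} &= A \otimes C,
\end{align*}
together with the fact that $\oplus$ and $\otimes$ both preserve nonnegativity of matrices.

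For the substantive part \ref{eqq: state3}, the idea is to allocate one diagonal copy of $B$ to each entry of $A$. Since $B \neq 0$, pick $(i_0, j_0)$ with $b := B_{i_0, j_0} > 0$, set $r = m_A n_A$, and index the diagonal blocks of $\id_r \otimes B$ by pairs $(k, l) \in [m_A] \times [n_A]$. Writing $X \in \RR_{+}^{m_A \times r m_B}$ and $Y \in \RR_{+}^{n_A \times r n_B}$ in corresponding column-block form $X = [X^{(k,l)}]_{(k,l)}$, $Y = [Y^{(k,l)}]_{(k,l)}$, a direct expansion gives
\begin{equation*}
X (\id_r \otimes B) Y^{\top} \;=\; \sum_{(k,l)} X^{(k,l)} B (Y^{(k,l)})^{\top}.
\end{equation*}
I will choose $X^{(k,l)} = (A_{k,l}/b)\, e_k e_{i_0}^{\top}$ and $Y^{(k,l)} = e_l e_{j_0}^{\top}$, both nonnegative; then $X^{(k,l)} B (Y^{(k,l)})^{\top} = A_{k,l}\, e_k e_l^{\top}$, and summing reproduces $A = \sum_{k,l} A_{k,l}\, e_k e_l^{\top}$. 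The main (and really only) obstacle is the notational bookkeeping: laying out the block decomposition of $\id_r \otimes B$ so that each entry of $A$ is produced by exactly one dedicated copy of $B$; once this is set up, nonnegativity and the required identity are immediate.
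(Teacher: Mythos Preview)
Your proof is correct. Parts \ref{eqq: state1} and \ref{eqq: state2} match the paper's argument (you simply spell out \ref{eqq: state1} more explicitly, while the paper calls it trivial).

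For \ref{eqq: state3} you take a different route from the paper. The paper argues by transitivity: with $r = \nrank(A)$ one has $A \leqk \id_r$ by definition of $\nrank$; since $B \neq 0$ one has $\id_1 \leqk B$; then part \ref{eqq: state2} (just proved) gives $\id_r = \id_r \otimes \id_1 \leqk \id_r \otimes B$, and the conclusion follows by chaining. Your approach instead builds the witnesses $X,Y$ explicitly with $r = m_A n_A$, dedicating one copy of $B$ to each entry of $A$. Both are valid: the paper's argument is shorter, reuses \ref{eqq: state2}, and yields the smaller value $r = \nrank(A)$; your construction is self-contained and makes the mechanism completely transparent without appealing to transitivity or to the notion of nonnegative rank. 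Since the lemma only asks for \emph{some} $r$, the larger value is harmless.
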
	
		\begin{proof}
			The statement \eqref{eqq: state1} is trivial.
			
			We prove \eqref{eqq: state2}. Let $U,V,X,Y$ such that $A = UBV^{T}$ and $C = XDY^{T}$. Then $A\oplus C = (U\oplus X)(A\oplus C)(V \oplus Y)^{T}$ and $A\otimes C = (U\otimes X)(B\otimes D)(V\otimes Y)^{T}$. This proves \eqref{eqq: state2}. 
			
			We prove \eqref{eqq: state3}. Let $r = \nrank(A)$. Then $A \leqk \id_{r}$. By assumption, $B \neq 0$, so $I_1 \leqk B$. Therefore $A \leqk \id_{r} \cong \id_{r} \otimes \id_{1} \leqk \id_{r} \otimes B$ (where $r = \nrank(A)$). This proves \eqref{eqq: state3}.
		\end{proof}
		We thus have a semiring $\M_{+}$ with a Strassen preorder $\leqk$. Thus, we can apply the theory of asymptotic spectra in this context. Let us translate the abstract terminology to this setting. 
		

		Let $S\subseteq \M_{+}$ be a semiring. For example, one may take $S = \M_{+}$, or one may choose any set $X \subseteq \M_{+}$ and let $S = \NN[X]$ be the subsemiring of $\M_{+}$ generated by $X$ under $\oplus$ and $\otimes$. 
		
		The asymptotic spectrum of $S$ is the set $\mathbf{X}(S,\leqk)$ of $\leqk$-monotone semiring homomorphism $S\to \RR_{\geq 0}$, i.e., all maps $\phi: S\to \RR_{\geq 0}$ such that. for all $A,B \in S$
		\begin{enumerate}[label = (\arabic*)]
			\item $\phi(A \otimes B) = \phi(A)\phi(B)$
			\item $\phi(A \oplus B) = \phi(A)+\phi(B)$
			\item $\phi(\id_{n}) = n$ for all $n \in \NN$
			\item If $A \leqk B$ then $\phi(A) \leq \phi(B)$.
		\end{enumerate}
		We call $\mathbf{X}(\M_{+},\leqk)$ the asymptotic spectrum of nonnegative matrices. In the next subsection, we use the asymptotic spectrum of nonnegative matrices to give a dual characterization for asymptotic nonnegative rank.  
		
		\subsection{Asymptotic nonnegative rank and asymptotic nonnegative subrank}
		From Subsection~\ref{subsec: asymptotic_spectrum_matrices}, we have shown that the preorder $\leqk$ is a Strassen preorder. Thus, similar to Section~\ref{sec2:spectra}, using the preorder $\leqk$, we can rewrite the nonnegative rank and define the nonnegative subrank of a nonnegative matrix $A$ as
		\begin{align*}
			\nrank(A) &\coloneqq \min \{r \in \NN: A \leqk \id_{r}\} \\
			\subnonrank (A) &\coloneqq \max \{ q \in \NN: \id_{q} \leqk A \} \, . 
		\end{align*} 
		
		Similarly to the asymptotic nonnegative rank,  the \emph{asymptotic nonnegative subrank} is naturally defined as follows.
		\begin{align*}
			\asympsubnonrank(A) &\coloneqq \lim_{n \ra \infty} \subnonrank(A^{\otimes n})^{1/n} = \sup_{n}\subnonrank(A^{\otimes n})^{1/n} \, .
		\end{align*}
		
		

		Using the abstract theory of Strassen, the asymptotic spectrum of nonnegative matrices can be used to characterize the asymptotic nonnegative rank and the asymptotic nonnegative subrank. 
		
		\begin{theorem}
			\label{thm:duality}
			Let $A$ be a nonnegative matrix. Then
			\begin{align*}
				\asynrank(A) = \max_{\phi \in \mathbf{X}(\M_{+},\leqk)} \phi(A) \text{, and }
				\asympsubnonrank(A) = \min_{\phi \in \mathbf{X}(\M_{+},\leqk)} \phi(A) \, .
			\end{align*} 
		\end{theorem}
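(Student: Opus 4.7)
The plan is to deduce the theorem directly from Strassen's asymptotic spectrum theorem (Theorem \ref{thm:Strassen_asymptotic}) applied to the commutative semiring $(\M_+, \oplus, \otimes, [0], [\id_1])$ endowed with the Strassen preorder $\leqk$. All the structural ingredients have already been assembled in Subsection \ref{subsec: asymptotic_spectrum_matrices}: the operators $\oplus$ and $\otimes$ are well-defined and commutative on the quotient space $\M_+$ (Lemma \ref{lem:welldefinedoperator}), the semiring axioms hold, and $\leqk$ is verified to be a Strassen preorder. Moreover, the equivalent definitions
\[
\nrank(A) = \min\{r \in \NN : A \leqk \id_r\}, \qquad \subnonrank(A) = \max\{q \in \NN : \id_q \leqk A\}
\]
match the abstract $\rankk$ and $\subrank$ of Section \ref{sec2:spectra}, so that $\asynrank(A)$ and $\asympsubnonrank(A)$ coincide with the abstract $\asymrank([A])$ and $\asymsubrank([A])$. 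This reformulation is what lets the abstract theorem be invoked.

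The next step is to verify the two nondegeneracy hypotheses of Theorem \ref{thm:Strassen_asymptotic} for the element $[A] \in \M_+$, namely $[\id_1] \leqk [A]$ and $[\id_2] \leqk [A]^{\otimes k}$ for some $k \in \NN$. The first holds whenever $A$ has a positive entry $A_{i_0,j_0} > 0$: taking the row vector $X = A_{i_0,j_0}^{-1} e_{i_0}^{\top}$ and column vector $Y = e_{j_0}^{\top}$ yields $X A Y^{\top} = 1 = \id_1$. The second amounts to $\subnonrank(A^{\otimes k}) \geq 2$ for some $k$, i.e., the bipartite support graph of some tensor power of $A$ contains an induced matching of size two. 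Once these are in place, Theorem \ref{thm:Strassen_asymptotic} delivers the two identities without further work.

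The principal obstacle I anticipate is treating the boundary cases in which the nondegeneracy hypotheses fail. If $A = 0$, the additive property $\phi(0) + \phi(0) = \phi(0\oplus 0) = \phi(0)$ forces $\phi(0) = 0$ for every $\phi$, while $\asynrank(0) = \asympsubnonrank(0) = 0$, so both equalities collapse to $0 = 0$. The genuinely delicate regime is when $A$ is nonzero but $\subnonrank(A^{\otimes k}) \leq 1$ for every $k$, so that $\asympsubnonrank(A) = 1$. Here, the subrank identity reduces to producing a spectral point achieving the value $1$ (the lower bound $\phi(A) \geq 1$ is immediate from $[\id_1] \leqk [A]$), while the rank identity requires a separate direct argument, or a mild strengthening of Theorem \ref{thm:Strassen_asymptotic} dropping the $2_S \leqk a^k$ hypothesis. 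Once these edge cases are settled, the substance of the proof is exactly the application of Strassen's theorem, and the preceding section's work provides precisely the semiring and preorder structure required.
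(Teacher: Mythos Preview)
Your setup is correct and matches the paper's framework: the semiring $(\M_+,\oplus,\otimes)$ with the Strassen preorder $\leqk$ is exactly what is needed, and the identification of $\nrank,\subnonrank$ with the abstract $\rankk,\subrank$ is right. The $A=0$ case and the verification $[\id_1]\leqk[A]$ for nonzero $A$ are also fine.

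The gap is precisely the ``delicate regime'' you flag but do not resolve. That regime is \emph{not} empty, and it is not a corner case one can wave away. Take for instance
\[
A=\begin{pmatrix}1&1\\1&2\end{pmatrix}.
\]
Every entry is positive, so $\supp(A^{\otimes k})$ is the full $2^k\times 2^k$ grid for every $k$; by Theorem~\ref{thm:nonnegative_subrank} this forces $\subnonrank(A^{\otimes k})=\gamma(A^{\otimes k})=1$ for all $k$, hence the hypothesis $\id_2\leqk A^{\otimes k}$ of Theorem~\ref{thm:Strassen_asymptotic} \emph{never} holds. Yet $\rank(A)=2$, so $\asynrank(A)\geq 2>1=\asympsubnonrank(A)$, and the rank identity is a genuine assertion here, not a tautology. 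Your proposal offers no argument for it beyond ``a separate direct argument, or a mild strengthening of Theorem~\ref{thm:Strassen_asymptotic}'', neither of which you supply.

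The paper closes this gap with a single clean trick that you are missing: for nonzero $A$, set $B\coloneqq A\oplus A$. Then $\id_2\leqk B$ holds trivially (take one positive entry from each copy of $A$), so Theorem~\ref{thm:Strassen_asymptotic} applies directly to $B$. Since every $\phi\in\mathbf{X}(\M_+,\leqk)$ is additive, $\phi(B)=2\phi(A)$; and because $\nrank$ and $\subnonrank$ are additive under $\oplus$ (the latter via Theorem~\ref{thm:nonnegative_subrank}), one gets $\asynrank(B)=2\asynrank(A)$ and $\asympsubnonrank(B)=2\asympsubnonrank(A)$. Dividing by $2$ yields the theorem for $A$. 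This doubling device is the substantive idea your proposal lacks.
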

		
		Before proceeding to the proof of Theorem~\ref{thm:duality}, we need to show first a combinatorial interpretation of subrank $\subnonrank$. It is easy to see that the nonnegative rank of matrix $A$ depends on its entries. In the other direction, we show in the following that the nonnegative subrank of a nonnegative matrix depends only on its support. We first introduce some notions and preliminary results. 
		
		Let $A$ be a nonnegative matrix, we remind that $\supp(A) = \{(i,j):A_{i,j}>0\}$ the support of $A$. Two pairs $(a,b),(c,d) \in \supp(A)$ is called \emph{totally pairwise independent} if $A_{a,d} = A_{c,b} = 0$. A subset $\Phi$ of $\supp(A)$ is called \emph{induced matching} of $A$ if each pair $(a,b) \neq (c,d) \in \Phi$ is totally pairwise independent. The name induced matching here come from the fact that the totally pairwise independent set of matrix $A$ is an induced matching of the bipartite graph that construct from the support of $A$. We denote $\gamma(A)$ to be the largest induced matching of $A$. Let us illustrate this notion by the following example. 
		\begin{example}
			\begin{equation*}
				A = 
				\begin{pmatrix}
					1 & 1 & 0 & 0 \\
					1& 0 & 1 & 0 \\
					0 & 1 & 0 & 1 \\
					1 & 0 & 1 & 1
				\end{pmatrix} \, .
			\end{equation*}
			It is to easy to verify that in this example, $\psi = \{(1,1),(2,3),(3,2),(4,4)\}$ is not an induced matching of $A$ (since $A_{2,1} = 1 \neq 0$), and $\Phi = \{(4,1),(3,2)\}$ is an induced matching of $A$. 
		\end{example}
		
		We show that for every nonnegative matrix, the largest size of an induced matching is equal to the nonnegative subrank, i.e.,
		\begin{theorem}
			\label{thm:nonnegative_subrank}
			For all nonnegative matrices $A$, $\subnonrank(A) = \gamma(A)$.
		\end{theorem}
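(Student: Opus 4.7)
The plan is to prove the two inequalities $\subnonrank(A) \geq \gamma(A)$ and $\subnonrank(A) \leq \gamma(A)$ separately, with the first direction being constructive and the second direction extracting an induced matching from any witness of the subrank.

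For $\subnonrank(A) \geq \gamma(A)$, I would take an induced matching $\Phi = \{(a_1,b_1),\dots,(a_q,b_q)\} \subseteq \supp(A)$ of size $q = \gamma(A)$ and explicitly build nonnegative matrices $X \in \RR_+^{q\times m_A}$ and $Y \in \RR_+^{q \times n_A}$ witnessing $\id_q \leqk A$. The natural choice is to set $X_{i,a_i} = c_i$ and $Y_{i,b_i} = d_i$ for positive scalars $c_i,d_i$ to be chosen, and zero elsewhere. A direct computation then gives $(XAY^{\top})_{i,k} = c_i\, A_{a_i, b_k}\, d_k$. The induced-matching hypothesis forces $A_{a_i,b_k} = 0$ whenever $i \neq k$, so the off-diagonal entries vanish automatically; on the diagonal, $A_{a_i,b_i} > 0$ since $(a_i,b_i)\in\supp(A)$, and one just takes $c_i d_i = 1/A_{a_i,b_i}$.

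For the reverse inequality, suppose $\id_q \leqk A$, so that $\id_q = XAY^\top$ for some nonnegative $X,Y$. The key step, and the main technical point, is to convert this algebraic identity into a combinatorial induced matching. Fix $i \in \{1,\ldots,q\}$ and write out the diagonal equation $1 = \sum_{u,v} X_{i,u}A_{u,v}Y_{i,v}$. Since every summand is nonnegative, there must exist at least one pair $(u_i,v_i)$ with $X_{i,u_i} > 0$, $A_{u_i,v_i} > 0$, and $Y_{i,v_i} > 0$; in particular $(u_i,v_i) \in \supp(A)$. Doing this for every $i$ produces a candidate set $\Phi' = \{(u_1,v_1),\dots,(u_q,v_q)\}$.

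It remains to verify that $\Phi'$ is an induced matching of size exactly $q$. For $i \neq k$, the off-diagonal equation $0 = \sum_{u,v} X_{i,u}A_{u,v}Y_{k,v}$ is a sum of nonnegative terms, so every term vanishes; specializing to $u = u_i$ and $v = v_k$ and using $X_{i,u_i} > 0$ and $Y_{k,v_k} > 0$ forces $A_{u_i,v_k} = 0$. Symmetrically $A_{u_k,v_i} = 0$, so any two distinct pairs in $\Phi'$ are totally pairwise independent. Distinctness of the pairs is automatic: if $(u_i,v_i) = (u_k,v_k)$ for $i\neq k$, then $A_{u_i,v_k} = A_{u_i,v_i} > 0$, contradicting what we just proved. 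Hence $|\Phi'| = q$ and $\gamma(A) \geq q = \subnonrank(A)$. The only mild obstacle is to be careful that the witness indices $u_i, v_i$ are chosen consistently (a single pair per $i$), but once this is done the argument is entirely driven by nonnegativity.
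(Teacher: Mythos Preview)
Your proof is correct. Both directions are sound, and the nonnegativity-based extraction argument for $\subnonrank(A)\leq\gamma(A)$ works exactly as you describe.

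The first direction ($\gamma(A)\leq\subnonrank(A)$) is essentially the paper's Proposition~\ref{prop:subrankbound}, just without the preliminary row/column permutation; your explicit $X,Y$ with one nonzero per row is the same construction as the paper's $[D^{-1/2}\ 0]$ after permuting the induced matching onto the diagonal.

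The second direction is where your route genuinely differs. The paper does not extract an induced matching directly from $\id_q = XAY^\top$. Instead it first proves a standalone monotonicity lemma (Proposition~\ref{prop:ordermatching}): for any nonnegative product $A=BC$, one has $\gamma(A)\leq\min(\gamma(B),\gamma(C))$. The theorem then follows by two applications, $\gamma(\id_m)\leq\gamma(XA)\leq\gamma(A)$. Your argument is the specialization of this lemma to the case where the product equals $\id_q$, carried out in one shot: the choice of a witness pair $(u_i,v_i)$ for each diagonal entry and the vanishing of off-diagonal sums is precisely the mechanism inside the proof of Proposition~\ref{prop:ordermatching}. What the paper's modular version buys is a reusable statement (monotonicity of $\gamma$ under the preorder $\leqk$), which is conceptually pleasant; what your direct version buys is brevity and no need for an auxiliary proposition.
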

		To prove Theorem~\ref{thm:nonnegative_subrank}, we first prove two results: Proposition~\ref{prop:subrankbound} and Proposition~\ref{prop:ordermatching}
		
		
		\begin{proposition}
			\label{prop:subrankbound}
			For any nonnegative matrix $A$, we have $\gamma(A) \leq \subnonrank(A)$.  
		\end{proposition}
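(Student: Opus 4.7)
The plan is to prove the inequality by an explicit construction: given a largest induced matching of $A$, I would exhibit nonnegative matrices $X, Y$ witnessing $\id_q \leqk A$ where $q = \gamma(A)$, which by definition of $\subnonrank$ yields $\subnonrank(A) \geq q$.

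First I would let $\Phi = \{(i_1, j_1), \ldots, (i_q, j_q)\} \subseteq \supp(A)$ be an induced matching of maximum size. A preliminary observation is that the row indices $i_1, \ldots, i_q$ must be pairwise distinct, and similarly for the column indices $j_1, \ldots, j_q$. Indeed, if $i_k = i_l$ with $k \neq l$, then the totally-pairwise-independent condition would force $A_{i_k, j_l} = A_{i_l, j_k} = 0$, but $A_{i_k, j_l} = A_{i_l, j_l} > 0$ since $(i_l, j_l) \in \supp(A)$, a contradiction. The same argument applies to the column indices.

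Next I would define two nonnegative matrices explicitly. Assume $A \in \RR_+^{m \times n}$. Define $X \in \RR_+^{q \times m}$ by $X_{k, i_k} = 1/A_{i_k, j_k}$ for $k = 1, \ldots, q$ and all other entries zero; define $Y \in \RR_+^{q \times n}$ by $Y_{k, j_k} = 1$ for $k = 1, \ldots, q$ and all other entries zero. Both matrices are well-defined and nonnegative because $A_{i_k, j_k} > 0$ for each $k$, thanks to $(i_k, j_k) \in \supp(A)$.

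Then I would compute
\begin{equation*}
(X A Y^\top)_{k, l} \;=\; \sum_{i, j} X_{k, i}\, A_{i, j}\, Y_{l, j} \;=\; X_{k, i_k}\, A_{i_k, j_l}\, Y_{l, j_l} \;=\; \frac{A_{i_k, j_l}}{A_{i_k, j_k}},
\end{equation*}
using that $X_{k, i}$ is nonzero only for $i = i_k$ and $Y_{l, j}$ is nonzero only for $j = j_l$. For $k = l$ this equals $1$. For $k \neq l$, the induced matching property gives $A_{i_k, j_l} = 0$, so the entry vanishes. Hence $X A Y^\top = \id_q$, which shows $\id_q \leqk A$ and therefore $\subnonrank(A) \geq q = \gamma(A)$.

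There is no real obstacle here: the induced-matching condition is tailored precisely to make the off-diagonal terms vanish, and the diagonal is handled by scaling. The only subtlety is the observation that the row and column indices in an induced matching must be distinct, which is what makes the $q$ chosen columns of $X$ (and of $Y$) independent indicator-type vectors and thus guarantees that the product $X A Y^\top$ has a clean closed form.
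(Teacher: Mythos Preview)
Your proof is correct and follows essentially the same approach as the paper's: both construct explicit nonnegative matrices $X,Y$ with $XAY^\top = \id_q$ by selecting the rows and columns indexed by the induced matching and rescaling by the diagonal entries. The only cosmetic differences are that the paper first permutes so that the matching sits on the diagonal and then uses the symmetric scaling $D^{-1/2}$ on each side, whereas you work directly with the original indices and place all the scaling in $X$.
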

		\begin{proof}
			We prove: $\gamma(A) \leq \subnonrank(A), \forall A \in \RR_{+}^{m\times n}$. Let $\{(x_1,y_1),\dots,(x_t,y_t) \} \subseteq \supp(A)$ be the maximum induced matching set of $A$. Due to the invariance of $\gamma$ and $\subnonrank$ under the permutation of rows and columns, we can assume WLOG that $x_i = y_i = i, \forall 1 \leq i \leq t$ (otherwise, we permute the rows and columns correspondingly). This allows us to write the matrix $A$ in the form of block matrices:
			\begin{equation*}
				A = \begin{pmatrix}
					D & A_1\\
					A_2 & A_3
				\end{pmatrix},
			\end{equation*} 
			where $D \in \RR^{t \times t}_+$ is a positive diagonal matrix and $A_i, i = 1,2,3$ are some nonnegative matrices. This claim holds since $\{(i, i) \mid i  = 1, \ldots, t\} \subseteq \supp(A)$ is an induced matching. Consider:
			\begin{equation}
				\label{eq:It}
				\begin{pmatrix}
					D^{-1/2} & 0
				\end{pmatrix}A \begin{pmatrix}
					D^{-1/2} \\ 0
				\end{pmatrix} = \begin{pmatrix}
					D^{-1/2} & 0
				\end{pmatrix}\begin{pmatrix}
					D & A_1\\
					A_2 & A_3
				\end{pmatrix} \begin{pmatrix}
					D^{-1/2} \\ 0
				\end{pmatrix} = I_t,
			\end{equation}
			where $D^{-1/2} \in \RR_+^{t \times t}$ is a positive diagonal matrix obtained by taking the root square of diagonal elements of $D$. Since $\left[\begin{smallmatrix} D^{-1/2} & 0 \end{smallmatrix}\right], \left[\begin{smallmatrix} D^{-1/2} \\ 0 \end{smallmatrix}\right]$ are nonnegative, Equation \eqref{eq:It} shows that $t = \gamma(A) \leq \subnonrank(A)$.
		\end{proof}
		
		\begin{proposition}
			\label{prop:ordermatching}
			Let $A,B,C$ be three nonnegative matrices such that $A = BC$, we have:
			\begin{align*}
				\gamma(A) \leq \min (\gamma(B), \gamma(C) ) \, .
			\end{align*}
		\end{proposition}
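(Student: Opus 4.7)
The plan is to prove the two inequalities $\gamma(A) \leq \gamma(B)$ and $\gamma(A) \leq \gamma(C)$ separately, by transporting a maximum induced matching of $A$ into induced matchings of $B$ and $C$ of the same size. Fix a maximum induced matching $\{(x_i,y_i)\}_{i=1}^{t} \subseteq \supp(A)$ of size $t = \gamma(A)$, so the $x_i$'s are all distinct, the $y_i$'s are all distinct, and $A_{x_i,y_j} = 0$ whenever $i \neq j$.

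Since $A = BC$ with nonnegative entries, for each $i$ the identity
\begin{equation*}
A_{x_i,y_i} = \sum_{k} B_{x_i,k}\, C_{k,y_i} > 0
\end{equation*}
forces the existence of some index $k_i$ such that both $B_{x_i,k_i} > 0$ and $C_{k_i,y_i} > 0$. Fix one such choice of $k_i$ for every $i$. The natural candidate for an induced matching of $B$ is $\Phi_B := \{(x_i,k_i)\}_{i=1}^t$, and for $C$ it is $\Phi_C := \{(k_i,y_i)\}_{i=1}^t$; both are contained in the respective supports by construction.

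The key step is to verify that these are genuinely induced matchings, which reduces to two checks. First, the $k_i$'s are pairwise distinct: if $k_i = k_j$ for some $i \neq j$, then nonnegativity gives $A_{x_i,y_j} \geq B_{x_i,k_i}\, C_{k_i,y_j} = B_{x_i,k_i}\, C_{k_j,y_j} > 0$, contradicting $A_{x_i,y_j}=0$. Second, for $i \neq j$ we must have $B_{x_i,k_j} = 0$ (and symmetrically $C_{k_i,y_j}=0$): otherwise $A_{x_i,y_j} \geq B_{x_i,k_j}\, C_{k_j,y_j} > 0$, again a contradiction. Combined with distinctness of the $x_i$'s, this shows $\Phi_B$ is an induced matching of $B$ of size $t$; the argument for $\Phi_C$ is entirely symmetric, using distinctness of the $y_i$'s and the equalities $A_{x_i,y_j}=0$ through the product $B_{x_i,k_i}\, C_{k_i,y_j}$.

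Concluding, $\gamma(B) \geq t$ and $\gamma(C) \geq t$, whence $\gamma(A) = t \leq \min(\gamma(B), \gamma(C))$. There is no real obstacle here; the only subtle point is recognizing that the nonnegativity of $B$ and $C$ is exactly what allows us to lift each zero $A_{x_i,y_j}=0$ into the termwise conclusion that all products $B_{x_i,k_j}\, C_{k_j,y_j}$ vanish, which is the entire content of the argument.
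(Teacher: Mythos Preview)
Your proof is correct and follows essentially the same approach as the paper: pick intermediate indices $k_i$ witnessing $A_{x_i,y_i}>0$ through the product, then use nonnegativity to deduce that all cross entries $B_{x_i,k_j}$ and $C_{k_i,y_j}$ vanish from $A_{x_i,y_j}=0$. The paper normalizes to $x_i=y_i=i$ and treats $B$ and $C$ separately, whereas you keep the indices general and use a single choice of $k_i$'s to build both $\Phi_B$ and $\Phi_C$ simultaneously, which is a minor streamlining but not a different idea.
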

		\begin{proof}
			It is sufficient to prove that $\gamma(A) \leq \gamma(B)$ and $\gamma(A) \leq \gamma(C)$. We will show the first claim. The second one can be dealt with similarly.
			
			Let $\{(x_1,y_1),\dots,(x_t,y_t)\}$ be one of the maximum induced matchings of $A$. Using the same argument as in the proof of Proposition \ref{prop:subrankbound}, we can assume WLOG that $x_i = y_i = i, \forall i = 1, \ldots, t$. Therefore, the top left $t \times t$ submatrix of $A$ is a diagonal matrix.
			
			Since $A = BC$, then the $i$th column of $A$ is equal to $\sum_{k}C_{k,i}B_{:k}$, where $B_{:k}$ is the column $k$-th of $B$ and $C_{k,i}$ is the coefficients of $C$ at the index $(k,i)$. For each $1 \leq i \leq t$, we define $S^i = \{k \mid C_{k,i} > 0, B_{i,k} > 0\}$. It is clear that $S^i$ is non-empty (otherwise, the value $A_{i,i}$ is zero, a contradiction). Moreover, $\forall k \in S^i, \forall j \leq t, j \neq i$, $B_{j,k} =0$ to keep the submatrix at the top left corner of the matrix $A$ diagonal. As a consequence, $S^{i} \cap S^{j} = \emptyset$. Hence, we have $\{(1,y_1),\dots,(t,y_t)\}$ is an induced matching of $B$ where $y_i$ is an arbitrary element of $S^i$. Therefore $\gamma(A) \leq \gamma(B)$.  
		\end{proof}
		
		We can now proceed to the proof of Theorem~\ref{thm:nonnegative_subrank}.
		\begin{proof}[Proof of Theorem~\ref{thm:nonnegative_subrank}]
			
			From Proposition~\ref{prop:subrankbound} we have $\gamma(A) \leq \subnonrank(A)$ for any nonnegative matrix $A$. Suppose that $\subnonrank(A) = m$, we have $\id_{m} = XAY^{T}$, from Proposition~\ref{prop:ordermatching}, one has
			
			\begin{align*}
				m = \gamma(\id_{m}) =  \gamma(XAY^{T}) \leq \gamma(XA) \leq \gamma(A).
			\end{align*}
			That implies $\gamma(A) = \subnonrank(A) = m$.
		\end{proof}
		Theorem \ref{thm:nonnegative_subrank} characterizes a very nice property of nonnegative subrank: it is exactly equal to the maximum induced matching of the bipartite graph whose adjacency matrix is the support of its matrix. Finding the subrank of a matrix is thus reducible to the problem of finding the maximum induced matching of a bipartite graph. Unfortunately, the maximum induced matching problem is NP-hard, even in the setting of bipartite graph~\cite{cameron1989induced}. Even so, by using the properties of nonnegative matrices, we provide a combinatorial interpretation of the subrank of a nonnegative matrix. The subrank or symmetric subrank studied in other works \cite{Str91, christandl2021symmetric, christandl2022gap}, however, do not have this property. These quantities depend also on their coefficients.
		
		With this characterization of $\subnonrank$, we can now prove Theorem~\ref{thm:duality}.
		\begin{proof}[Proof of Theorem~\ref{thm:duality}]
			There are two cases to consider:
			\begin{enumerate}[leftmargin=*]
				\item If $A = 0$, then $\phi(A) = 0 $ for any $\phi \in \mathbf{X}(\M_{+},\leqk)$. The result follows trivially.
				\item If $A \neq 0$, it is difficult to directly apply Theorem~\ref{thm:Strassen_asymptotic} since it is non-trivial to verify the condition: There exists an integer $k$ such that $I_2 \leqk A^k$. In this proof, we provide a simple workaround: Instead of considering $A$, we consider $B:= (A \oplus A)$. It is trivial that $I_2 \leqk B^k$ with $k = 1$ because we assume that $A$ is non-zero. Therefore, using Theorem~\ref{thm:Strassen_asymptotic}, we obtain:
				\begin{align*}
					\asymrank(B)  = \max_{\phi \in \mathbf{X}(S,\leqslant)} \phi(B) \text{, and } \asymsubrank(B) = \min_{\phi \in \mathbf{X}(S,\leqslant)}\phi(B) \, .
				\end{align*}
				Due to the definition of $\mathbf{X}(S,\leqslant)$, we have: $\phi(B) = 2\phi(A), \forall \phi \in \mathbf{X}(S,\leqslant)$. Thus, we can rewrite the above equation as:
				\begin{equation}
					\label{eq:relation2times}
					\asymrank(B)  = 2\max_{\phi \in \mathbf{X}(S,\leqslant)} \phi(A) \text{, and } \asymsubrank(B) = 2\min_{\phi \in \mathbf{X}(S,\leqslant)}\phi(A) \, .
				\end{equation}
				It is thus, sufficient to show the two following equalities:
				\begin{align*}
					\nrank(B^k) & = 2^k \nrank(A^k)\\
					\subnonrank(B^k) & = 2^k \subnonrank(A^k).
				\end{align*}
				Using definitions of $\asynrank$ and $\asympsubnonrank$, we deduce that $\asynrank(B) = 2 \asynrank(A)$ and $\asympsubnonrank(B) = 2 \asympsubnonrank(A)$. Combining with Equation~\eqref{eq:relation2times}, we prove the theorem.
				
				To this end, we first observe that:
				\begin{equation}
					\label{eq:newtonexpansion}
					B^k = (A \oplus A)^k = \underbrace{A^k \oplus \ldots \oplus A^k}_{2^k \text{ times}}.
				\end{equation}
				For the case of $\nrank$, we will use the following claim: for every pair of nonnegative matrices $(X,Y)$, we have: $\nrank(X \oplus Y) = \nrank(X) + \nrank(Y)$. The proof of this claim can be found in \ref{appendix:techlem}. As a consequence of Equation~\eqref{eq:newtonexpansion}, we have: $\nrank(B^k) = 2^k \nrank(A^k)$.
				
				For the case of $\subnonrank$, thanks to Theorem~\ref{thm:nonnegative_subrank}, it is also clear that $\subnonrank$ is additive under $\oplus$, i.e., $\subnonrank(X \oplus Y) = \subnonrank(X) + \subnonrank(Y)$ for all nonnegative matrices $X, Y$. Indeed, the bipartite graph (corresponding to the support) of $(X \oplus Y)$ is the union of two disconnected bipartite graphs (corresponding to the support) of $X$ and $Y$ respectively. Thus, the maximum induced matching of $(X \oplus Y)$ is equal to the sum of these of $X$ and $Y$. Using the similar argument for $\nrank$, we also have $\subnonrank(B^k) = 2^k \subnonrank(A^k)$. This concludes the proof.\qedhere
			\end{enumerate}
			
			
		\end{proof}

		\section{Some elements in the asymptotic spectrum of nonnegative matrices}
		\label{sec4:elements_spectral_point}
		From Theorem~\ref{thm:duality}, an explicit description of the asymptotic spectrum of nonnegative matrices is desired to understand the asymptotic nonnegative rank. However, the abstract theory of asymptotic spectra in Section~\ref{sec2:spectra} does not give an explicit construction the elements of $\mathbf{X}(\M_{+},\leqk)$. Previous works for graphs and tensors, thus, are devoted to show some elements in their asymptotic spectrum. Namely, for the (quantum) asymptotic spectrum of graphs, several spectral points were introduced in~\cite{zuiddam2019asymptotic_graph,li2020quantum,gao2022tracial,gribling2020haemers} such as Lovász theta number~\cite{lovasz1979shannon}. In the case of tensors, \cite{Str91} provided a universal spectral point for the asymptotic spectrum of oblique tensors (oblique is a strict subset of all tensors). Recently, in~\cite{christandl2018universal} constructed an infinite family of the asymptotic spectrums of tensors over complex field. We recall that known all elements of the asymptotic spectrum of tensors can be used to determine the complexity of the matrix multiplication problem. In the same spirit, in this section, we show two  elements that belong asymptotic spectrum of nonnegative matrices $\mathbf{X}(\M_{+},\leqk)$ introduced in Section \ref{sec3:spectrum_nonnegative_matrices}, namely rank and fractional cover number. In addition, in the last subsection, we show that the asymptotic spectrum of lower triangular nonnegative matrices with non-zero values on their diagonal contains exactly a single point. 
		
		\subsection{Rank}
		We first remind readers of the definition of rank of a matrix.
		\begin{definition}
			Given a matrix $A \in \RR^{m \times n}$, the rank of the matrix $A$ (denoted as $\rank(A)$) is the largest integer $r$ such that there exists an invertible square submatrix of $A$ of size $r \times r$.
		\end{definition}
		The following lemma shows that $\rank$ is indeed a spectral point of $\mathbf{X}(\M_{+},\leqk)$.
		\begin{lemma}
			The matrix \emph{rank} is an element of the asymptotic spectrum $\mathbf{X}(\M_{+},\leqk)$. 
		\end{lemma}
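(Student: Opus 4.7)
The plan is to verify the four defining properties of a spectral point for the rank function, which are multiplicativity under $\otimes$, additivity under $\oplus$, normalization $\rank(\id_n)=n$, and $\leqk$-monotonicity. First I would observe that $\rank$ is well-defined on the equivalence classes of $\sim$: if $A\sim B$, then some $A\oplus A_0\cong B\oplus B_0$ with $A_0,B_0$ zero matrices; since padding with zero rows/columns does not change rank and $\cong$ identifies matrices related by $\GL^+$ transformations on either side (hence by invertible transformations, which preserve rank), we get $\rank(A)=\rank(A\oplus A_0)=\rank(B\oplus B_0)=\rank(B)$. So $\rank$ descends to a function on $\M_+$.

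Next I would run through the semiring-homomorphism conditions using standard linear algebra. Multiplicativity $\rank(A\otimes B)=\rank(A)\rank(B)$ is a classical identity for the Kronecker product, which can for instance be derived from the fact that the nonzero singular values (or the pivots after reduction) of $A\otimes B$ are the pairwise products of those of $A$ and $B$. Additivity $\rank(A\oplus B)=\rank(A)+\rank(B)$ is immediate from the block-diagonal structure, since a maximal independent set of rows of $A\oplus B$ decomposes into independent rows of $A$ and of $B$. The normalization $\rank(\id_n)=n$ is obvious.

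For monotonicity, suppose $A\leqk B$, so there exist nonnegative $X,Y$ with $A=XBY^{\top}$. Then by the standard submultiplicativity of rank under matrix multiplication,
\begin{equation*}
\rank(A)=\rank(XBY^{\top})\le \rank(B),
\end{equation*}
which gives the required inequality. (Note that here we do not need $X,Y$ to be nonnegative; any matrices would do. The nonnegativity will, however, matter for the next spectral point we exhibit, the fractional cover number.)

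The argument involves no real obstacle since every property is a classical fact about rank; the only subtle point is checking well-definedness on $\M_+$, which follows cleanly from the zero-padding observation together with invariance of rank under invertible transformations. I would therefore expect the proof to be short and essentially a verification.
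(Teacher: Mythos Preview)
Your proposal is correct and follows essentially the same approach as the paper's proof: verify multiplicativity, additivity, normalization, and $\leqk$-monotonicity of rank, the last via $\rank(XBY^\top)\le\rank(B)$. You are in fact slightly more careful than the paper, since you explicitly check that $\rank$ is well-defined on $\sim$-equivalence classes, a point the paper's proof silently skips.
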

		\begin{proof}
			Matrix rank is well-known to be multiplicative under $\otimes$ and additive under $\oplus$. Also clearly $\rank(\id_{n}) =n$ for all $n \in \NN$. Finally, it holds for any two matrices $X,Y$ that if $A = XBY$, using the fact that $\rank(AB) \leq \min \{\rank(A),\rank(B)\}$, then $\rank(A) \leq \rank(B)$. In particular, this is true if $X,Y$ are nonnegative matrices, which proves rank is monotone.
		\end{proof}	
		
		\subsection{Fractional cover number}
		The fractional cover number is a parameter of nonnegative matrices introduced by Lovász~\cite{LOVASZ1975383} to study the chromatic number of uniform hypergraphs. Furthermore, in~\cite{karchmer1995fractional}, the authors showed that this parameter gives a characterization for amortized
		nondeterministic communication complexity. In this subsection, we show that the fractional cover number is a spectral point of $\mathbf{X}(\M_{+},\leqk)$. 
		
		We first remind readers of the definition of fractional cover number.  Let $A \in \RR_{+}^{m\times n}$ be a nonnegative matrix, a subset $X\times Y \subseteq [m] \times [n]$ is called monochromatic rectangle of the matrix $A$ if $X \times Y \subseteq \supp(A)$. Let $R_1,\dots,R_{k}$ be all monochromatic rectangle of $A$. We define the \emph{fractional cover number} of $A$, denoted $F(A)$, as follows
		\begin{align*}
			\text{ minimize: } \quad &\sum_{i=1}^{k}\tau(R_i) \\
			\text{ s.t } &\sum_{i: (a,b) \in R_i}\tau(R_i) \geq 1 \text{ for all } (a,b) \in \supp(A) \, \\
			&0 \leq \tau(R_i) \leq 1 \,  \text{ for all } i \in [k] \, .
		\end{align*}
	The dual of the above linear program is given by: 
		\begin{align*}
			\text{ maximize: } \quad &\sum_{(a,b) \in \supp(A)} \mu(a,b) \\
			\text{ s.t } &\sum_{(a,b) \in R_i} \mu(a,b) \leq 1  \text{ for all } i \in [k] \, \\
			&0 \leq \mu(a,b) \leq 1 \, \text{ for all } (a,b) \in \supp(A) \, . 
		\end{align*}  	
		In~\cite{LOVASZ1975383}, the author showed that the fractional cover number has the following properties.
		\begin{enumerate}[label=(\roman*)]
			\item Multiplicative: $F(A\otimes B) = F(A)F(B)$.
			\item Additive: $F(A \oplus B) =  F(A)+F(B)$. 
		\end{enumerate}
		It is easy to verify that $F(\id_{n}) = n$ for all $n \in \NN$. To prove that $F$ is a spectral point, it is sufficient to prove its monotonicity with respect to the preorder $\leqk$, that is, if $A\leqk B$ then $F(A) \leq F(B)$. To do this, we first recall another characterization of fractional cover.
		
		We begin by reminding the notions of \emph{$t$-fold coverings} and \emph{$t$-fold covering number} of nonnegative $A$ \cite{scheinerman2011fractional}. Let $t$ be a positive integer, a $t$-fold covering of $A$ is a multiset $\{R_1,\dots,R_k\}$ (unlike set, a multi-set allows an element to appear multiple times) where each $R_i$ is a monochromatic rectangle of $A$ and each $(i,j) \in \supp(A)$ is covered by at least $t$ of the $R_i$'s. The smallest cardinality of such a multiset is called the $t$-fold covering number of $A$ and is denoted $F_{t}(A)$. One can describe $F_{t}(A)$ as an integer linear programming as:
		\begin{align*}
			\text{ minimize: } \quad &\sum_{i=1}^{k}\tau(R_i) \\
			\text{ s.t } &\sum_{i: (a,b) \in R_i}\tau(R_i) \geq t \text{ for all } (a,b) \in \supp(A) \, \\
			&\tau(R_i) \in \NN \,  \text{ for all } i \in [k] \, .
		\end{align*}
		In particular, $F(A)$ is simply a linear relaxation of the integer programming definition of $F_1(A)$. It is clear that $F_{t}$ is subadditive in its subscript, that is,
		\begin{align*}
			F_{s+t}(A) \leq F_{s}(A)+F_{t}(A),
		\end{align*} 
		because the sum of $s$ and $t$-fold coverings yields a $(s+t)$-fold coverings. By Fekete's lemma, we can define the \emph{asymptotic fractional covering number} of $A$ as follow
		\begin{align*}
			F^{*}(A) \coloneqq \lim_{t \ra \infty}\frac{F_t(A)}{t} = \inf_{t} \frac{F_{t}(A)}{t} \, . 
		\end{align*}
		
		\begin{proposition}[\cite{scheinerman2011fractional}]
			\label{prop:fractionalcharacterization}
			For any nonnegative matrix $A$, the fractional cover number is equal to the asymptotic fractional cover number, that is, $F^{*}(A) = F(A)$. 
		\end{proposition}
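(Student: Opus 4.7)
The plan is to prove the two inequalities $F(A) \leq F^{*}(A)$ and $F^{*}(A) \leq F(A)$ separately, both via direct manipulation of the associated linear programs.

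For the first inequality, I would start from any $t$-fold covering, say a multiset encoded by integer multiplicities $\tau: \{R_1,\dots,R_k\} \to \NN$ with $\sum_i \tau(R_i) = F_t(A)$ and $\sum_{i:(a,b)\in R_i} \tau(R_i) \geq t$ for every $(a,b) \in \supp(A)$. Defining $\tau'(R_i) := \tau(R_i)/t$ then yields a feasible fractional solution to the LP defining $F(A)$ (after capping at $1$, which only decreases the objective and preserves feasibility since one may instead drop the upper bound $\tau(R_i)\leq 1$ in the LP without changing its optimum, as the constraints only become tighter when $\tau(R_i)>1$ by monotonicity of covering). Hence $F(A) \leq \sum_i \tau'(R_i) = F_t(A)/t$ for every $t$, and taking the infimum over $t$ gives $F(A) \leq F^{*}(A)$.

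For the reverse inequality, I would exploit the rationality of the LP. Since the LP defining $F(A)$ has rational (indeed integer $0/1$) constraint matrix and rational objective, it admits an optimal vertex solution $\tau^{*}$ with rational entries. Let $q$ be a common denominator so that $q\,\tau^{*}(R_i) \in \NN$ for all $i$. Then $q\,\tau^{*}$ is an integer vector satisfying $\sum_{i:(a,b)\in R_i} q\,\tau^{*}(R_i) \geq q$ for every $(a,b) \in \supp(A)$, i.e., it is a feasible $q$-fold covering. Hence $F_q(A) \leq \sum_i q\,\tau^{*}(R_i) = q\,F(A)$, and therefore $F^{*}(A) \leq F_q(A)/q \leq F(A)$.

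Combining the two inequalities yields $F^{*}(A) = F(A)$. I do not anticipate any real obstacle: the only subtlety is handling the cosmetic upper bound $\tau(R_i) \leq 1$ in the LP defining $F(A)$, which can either be argued to be redundant (any optimal solution can be truncated to $\min(\tau,1)$ without violating feasibility or increasing the objective) or removed from the definition altogether. Once that is dispensed with, both directions are one-line scaling arguments.
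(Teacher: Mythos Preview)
Your proposal is correct and follows essentially the same argument as the paper: scale a $t$-fold covering by $1/t$ to get a feasible fractional solution for one direction, and clear denominators of a rational LP optimum to produce a $q$-fold covering for the other. You even flag the cosmetic $\tau(R_i)\leq 1$ constraint, which the paper silently drops when rewriting the LP in compact form; your handling of it is fine.
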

		\begin{proof}
			The proof can be found in~\cite{scheinerman2011fractional}, but we do it here for self-containedness.
			
			Let $R:= \{R_1,\dots,R_k\}$ be the set of all monochromatic rectangles of $A$. The fractional cover $F(A)$ can be described in the following compact form: 
			\begin{equation}
				\label{program:1}
				\begin{aligned}
					\text{minimize: } \sum_{i=1}^{k}x_{i} \\
					\text{s.t. } \quad Mx &\geq 1 \\
					x &\geq 0 \, . 		 
				\end{aligned}
			\end{equation}
			where $M$ is a binary matrix dependent on the set $R$. 
			
			First we prove $F(A) \leq F^{*}(A)$. Given an arbitrary positive integer $a$, let $\{R_1,\dots,R_v\}$ be the smallest $a$-fold covering of $A$ and let $x = (x_1, \ldots, x_k)$ the vector encoding the number of times $R_i$ appears in this covering. Then $Mx \geq a, x \geq 0$. This implies that $M(x/a) \geq 1$. Thus $x/a$ is a feasible solution of the linear program~\eqref{program:1}, hence $F(A) \leq \sum_{i=1}^{k}x_i/a = F_{a}(A)/a$. Since this holds for all positive integers $a$, we have $F(A) \leq F^{*}(A)$.
			
			Next we prove that $F(A) \geq F^{*}(A)$. Since all coefficients of the linear programming \eqref{program:1} are integers, there exists at least one rational vector $x$ that yields the minimum value of the linear program~\eqref{program:1}. Let $b$ be an integer such that $bx$ is an integer vector, we have $M(bx) \geq b, bx \geq 0$. This shows that we can form an $b$-fold cover of $A$ by choosing a monochromatic rectangle $R_i$ with multiplicity $bx_i$. Therefore $bF(A) = \sum_{i=1}^{k}bx_i \geq F_{b}(A)$. By definition of $F^*(A)$, we have $F(A) \geq F_b(A) / b \geq F^*(A)$.
		\end{proof}
		Using Proposition~\ref{prop:fractionalcharacterization}, we obtain the following result.
		\begin{proposition}
			\label{prop:fracorder}
			Let $A,B,C$ be nonnegative matrices, if $A = BC$ then $F(A) \leq \min(F(B),F(C))$.
		\end{proposition}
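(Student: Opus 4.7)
The plan is to prove $F(A) \leq F(B)$ via a direct pullback of monochromatic rectangles, and then deduce $F(A) \leq F(C)$ by the mirror construction (equivalently, by transposing everything). The key observation is that products of matrices can only \emph{enlarge} monochromatic rectangles, so any feasible fractional cover of a factor pulls back to a feasible fractional cover of the product without increasing total weight.

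First I would establish the following rectangle-pullback lemma: if $X \times Y \subseteq \supp(B)$ is a monochromatic rectangle of $B$, then setting $Y^{\sharp} := \bigcup_{y \in Y} \{\, j : C_{y,j} > 0 \,\}$, the set $X \times Y^{\sharp}$ is a monochromatic rectangle of $A = BC$. Indeed, if $(i,j) \in X \times Y^{\sharp}$, pick $y \in Y$ with $C_{y,j} > 0$; by monochromaticity $B_{i,y} > 0$, whence $A_{i,j} \geq B_{i,y} C_{y,j} > 0$.

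Next I would translate this into the linear program defining $F$. Let $\{(R_i, \tau(R_i))\}_{i=1}^{k}$ with $R_i = X_i \times Y_i$ be an optimal fractional cover of $B$, so $\sum_i \tau(R_i) = F(B)$ and $\sum_{i:\,(a,k) \in R_i} \tau(R_i) \geq 1$ for every $(a,k) \in \supp(B)$. Define $R_i' := X_i \times Y_i^{\sharp}$; these are monochromatic rectangles of $A$ by the previous step. To verify feasibility for $A$, fix $(a,b) \in \supp(A)$: since $A_{a,b} = \sum_{k} B_{a,k} C_{k,b} > 0$, there exists $k$ with $B_{a,k} > 0$ and $C_{k,b} > 0$, so $(a,k) \in \supp(B)$. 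Whenever $(a,k) \in R_i$ we have $a \in X_i$, $k \in Y_i$, and hence $b \in Y_i^{\sharp}$ because $C_{k,b} > 0$; consequently $(a,b) \in R_i'$. Therefore
\[
\sum_{i:\,(a,b) \in R_i'} \tau(R_i) \;\geq\; \sum_{i:\,(a,k) \in R_i} \tau(R_i) \;\geq\; 1,
\]
so $\{(R_i', \tau(R_i))\}_i$ is a feasible fractional cover of $A$ with total weight $F(B)$, which yields $F(A) \leq F(B)$.

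For the inequality $F(A) \leq F(C)$ I would apply the mirror construction: if $X \times Y \subseteq \supp(C)$ is a monochromatic rectangle of $C$, then $X^{\flat} \times Y$ with $X^{\flat} := \bigcup_{x \in X} \{\, i : B_{i,x} > 0 \,\}$ is a monochromatic rectangle of $A$, and the identical LP transfer argument converts an optimal fractional cover of $C$ into a feasible fractional cover of $A$ of the same weight. Both constructions are routine once the pullback lemma is in hand, so I do not anticipate any real obstacle; the only point requiring care is the feasibility check, where one must exhibit, for each $(a,b) \in \supp(A)$, a witness index $k$ through which both factors contribute, and this is forced by the definition of matrix multiplication applied to $A = BC$.
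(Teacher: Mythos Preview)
Your proof is correct, and the rectangle pullback you construct is exactly the one the paper uses: your $X_i \times Y_i^{\sharp}$ is precisely the paper's $R_i' = \supp(R_iC)$ once one unwinds the matrix notation. The one genuine difference is in how the two arguments package the counting step. The paper routes through the integer $t$-fold covering number and Proposition~\ref{prop:fractionalcharacterization} (the identity $F = F^*$), showing $F_t(A) \leq F_t(B)$ for every $t$ and then passing to the limit. You instead plug the pullback directly into the LP defining $F$, transporting the optimal fractional weights $\tau(R_i)$ unchanged and checking feasibility. Your route is a bit more direct, since it avoids the detour through $F^*$ and does not need LP integrality or Fekete-type limits; the paper's route has the minor advantage of keeping everything combinatorial (multisets of rectangles rather than real weights), which some readers may find conceptually cleaner. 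Either way the substance is the same.
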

		\begin{proof}
			It is sufficient to prove that $F(A) \leq F(B), F(A) \leq F(C)$. We will show the former, i.e. $F(A) \leq F(B)$. The latter can be demonstrated similarly.
			
			In this proof, we represent a monochromatic rectangle by a rank-one binary matrix. In fact, given a subset $X \times Y \subset [m] \times [n]$, we represent it by the binary matrix $1_X \times 1_Y^\top \in \RR_{+}^{m \times n}$ where $1_X \in \RR_{+}^m$ (resp. $1_Y \in \RR_{+}^n$) is the indication vector of the set $X$ (resp. $Y$).
			
			If $A = BC$, we will show that $F_{t}(A) \leq F_t(B)$ for any positive integer $t$. That concludes the proof by using the characterization of Proposition \ref{prop:fractionalcharacterization}.
			
			Let $\{R_1,\dots,R_k\}$ be a multiset of monochromatic rectangles that yields the minimum value for $t$-fold covering of $B$. As discussed, there exist two binary vectors $x_i, y_i$ such that $R_i = x_i y_i^\top$. 
			
			We define $R'_i = \supp(R_iC)$ ($R_iC$ is the product between two nonnegative matrices $R_i$ and $C$). We prove that the set $\{R_1', \ldots,  R_k'\}$ is a $t$-fold coverings of $A$. It is immediate that $F_t(A) \leq k = F_t(B)$. To do so, we first demonstrate that $R_i'$ is a monochromatic rectangle of $A$. Indeed, we use the following property of nonnegative matrices: for $M,N$ two nonnegative matrices, then:
			\begin{equation}
				\label{eq:supportrelation}
				\supp(MN) = \supp(\supp(M)\supp(N)),
			\end{equation}
			where $\supp(M)$ is the binary matrix representation of the support of the matrix $M$. In other words, the support of the product of $MN$ is determined by the product of supports of $M$ and $N$ (in their binary matrix forms). This claim only holds for nonnegative matrices and it is incorrect for general matrices.
			
			First, it is clear that $R'_i \subseteq \supp(A)$. Indeed,
			\begin{align*}
				R'_i= \supp(R_iC) &= \supp(\supp(R_i)\supp(C))\\
				&\subseteq \supp(\supp(B)\supp(C)) \qquad \text{(since $R_i \subseteq \supp(B)$)}\\
				&\overset{\eqref{eq:supportrelation}}{=} \supp(BC) = \supp(A).
			\end{align*}
			Moreover, 
			$$R'_i = \supp(R_iC) = \supp(x_iy_i^\top C) = \supp(\supp(x_i)\supp(y_i^\top C)) = \supp(x_i)\supp(y_i^\top C)$$
			since $\supp(x_i)\supp(y_i^\top C)$ is a binary matrix (product of two binary vectors). Thus, $R_i'$ is indeed a monochromatic rectangle of $A$.

			For any entry $(i,j) \in \supp(A)$, since $0<A_{ij} = \sum_{z=1}B_{i,z}C_{z,j}$, there is an index $z$ such that  $B_{i,z}>0$ and $C_{z,j}>0$. Suppose that $(i,z) \in \supp(B)$ is covered by a multiset $\{R_{i_1},\dots,R_{i_t} \}$ then the entry $(i,j) \in \supp(A)$ is covered by the multiset $\{R'_{i_1},\dots,R'_{i_t} \}$. Therefore, $\{R'_1, \dots, R'_k\}$ are indeed a $t$-fold coverings of $A$. That concludes the proof. \qedhere
		\end{proof}
		
		\begin{proposition}
		    If $A\leqk B$ then $F(A) \leq F(B)$.
		\end{proposition}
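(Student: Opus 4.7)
The plan is to reduce the statement to two applications of Proposition~\ref{prop:fracorder}, which controls the fractional cover number under ordinary matrix multiplication. By definition of the preorder $\leqk$, the hypothesis $A \leqk B$ provides nonnegative matrices $X$ and $Y$ of appropriate sizes with $A = X B Y^\top$. The idea is to parse this triple product as a product of two nonnegative matrices in two different ways so that Proposition~\ref{prop:fracorder} can be invoked twice, peeling off $X$ first and then $Y^\top$.

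First I would set $B' \coloneqq B Y^\top$, which is a nonnegative matrix since both factors are nonnegative. Writing $A = X \cdot B'$ as a product of two nonnegative matrices, Proposition~\ref{prop:fracorder} yields $F(A) \le \min\bigl(F(X), F(B')\bigr) \le F(B')$. Second, since $B' = B \cdot Y^\top$ is again a product of two nonnegative matrices, the same proposition gives $F(B') \le \min\bigl(F(B), F(Y^\top)\bigr) \le F(B)$. Chaining the two inequalities produces $F(A) \le F(B)$, which is exactly the claim.

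There is no real obstacle here: the work has already been done in Proposition~\ref{prop:fracorder}, and the only content of the present proposition is to observe that the three-factor decomposition $A = X B Y^\top$ can be associated either way into a product of two nonnegative matrices. Once monotonicity is established, together with multiplicativity, additivity, and the normalization $F(\id_n) = n$ recalled earlier in the subsection, we obtain that the fractional cover number $F$ is a spectral point of $\mathbf{X}(\M_+, \leqk)$.
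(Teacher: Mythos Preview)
Your proof is correct and essentially identical to the paper's: both apply Proposition~\ref{prop:fracorder} twice to peel off the outer factors from $A = XBY^\top$. The only cosmetic difference is the order of association---you group as $X(BY^\top)$ while the paper groups as $(XB)Y^\top$---but the argument is the same.
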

		\begin{proof}
			From $A \leqk B$, we have $A = XBY^{T}$. By the Proposition~\ref{prop:fracorder}, one has
			\begin{align*}
				F(A) = F(XBY^{T}) \leq F(XB)  \leq F(B) \, .
			\end{align*}     
		\end{proof}
		We have the following lemma.
		\begin{lemma}
		    The fractional cover number $F$ is an element of the asymptotic spectrum $\mathbf{X}(\M_{+},\leqk)$. 
		\end{lemma}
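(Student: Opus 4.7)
The plan is simply to verify the four defining axioms of a monotone semiring homomorphism for $F$, since each ingredient has already been assembled earlier in this subsection. Specifically, multiplicativity $F(A \otimes B) = F(A)F(B)$ and additivity $F(A \oplus B) = F(A) + F(B)$ are exactly properties (i) and (ii) of Lov\'asz stated above; monotonicity $A \leqk B \Rightarrow F(A) \leq F(B)$ is the content of the preceding proposition; and the normalization $F(\id_n) = n$ has been observed. So the entire proof is a one-line assembly.

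For completeness I would briefly check the normalization claim $F(\id_n) = n$: the support $\supp(\id_n)$ is the diagonal $\{(i,i) : i \in [n]\}$, and any monochromatic rectangle $X \times Y \subseteq \supp(\id_n)$ must satisfy $X = Y = \{i\}$ for a single $i$ (otherwise some off-diagonal pair $(i,j) \in X \times Y$ with $i \ne j$ would lie outside $\supp(\id_n)$). Hence the only available rectangles are the $n$ diagonal singletons, and the unique feasible (and optimal) fractional cover assigns weight $1$ to each, giving $F(\id_n) = n$; in particular $F(\id_1) = 1$.

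There is no real obstacle at this stage; the work has been front-loaded into Proposition \ref{prop:fracorder}, whose monotonicity proof is what made essential use of the linear-programming characterization $F = F^*$ from Proposition \ref{prop:fractionalcharacterization} together with the nonnegative-matrix support identity $\supp(MN) = \supp(\supp(M)\supp(N))$. Consequently the proof of the lemma reduces to citing the four prior facts and concluding that $F \in \mathbf{X}(\M_+, \leqk)$.
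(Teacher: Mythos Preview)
Your proposal is correct and mirrors the paper's approach exactly: the paper likewise assembles the four axioms---multiplicativity and additivity from Lov\'asz, the normalization $F(\id_n)=n$ (merely asserted as ``easy to verify''), and monotonicity from the immediately preceding proposition---and then states the lemma without further argument. Your explicit verification of $F(\id_n)=n$ is a small addition beyond what the paper spells out, but otherwise the structure is identical.
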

		\begin{remark} \textbf{Rank and fractional cover number are incomparable}. We have two elements in asymptotic spectrum $\mathbf{X}(\M_{+},\leqk)$ are \emph{rank} and \emph{fractional cover number} $F$. The following examples show that two quantities are incomparable, i.e., there are nonnegative matrices $A,B$ with $F(A) > \rank(A)$  and $F(B) < \rank(B)$.
			\begin{equation*}
				A = 
				\begin{pmatrix}
					1 & 1 & 0 & 0 \\
					1 & 0 & 1 & 0 \\
					0 & 1 & 0 & 1 \\
					0 & 0 & 1 & 1
				\end{pmatrix}
			\end{equation*}
			We have $F(A) = 4$, but $\rank(A) = 3$.

			\begin{equation*}
				B = 
				\begin{pmatrix}
					1 & 0 & 1 & 1 \\
					0 & 1 & 1 & 0 \\
					0 & 1 & 1 & 1 \\
					1 & 1 & 0 & 1
				\end{pmatrix}
			\end{equation*}
			On can compute that $F(B) = 3.5$, and $\rank(B) = 4$.
		\end{remark}

		\subsection{Asymptotic spectrum of triangular nonnegative matrices}
		
		As defined in Section~\ref{sec3:spectrum_nonnegative_matrices}, asymptotic spectrum is a $\leqk$ monotone semiring homomorphism from $S$ to $\RR_{\geq 0}$ that satisfies the constraint $\phi(\id_n) = n, \forall n \in \NN, \forall \phi \in \mathbf{X}(\M_{+},\leqk)$. In other words, the set of $\{\phi(\id_n) \mid \phi \in \mathbf{X}(\M_{+},\leqk)\}$ collapses to a single value $n$. In this section, we prove that this phenomenon also happens to the set of lower (square) triangular matrices with nonzero diagonal. 
		
		Let $A \in \RR_{+}^{m\times n}$ be a nonnegative matrix, we call the matrix $A$ is a \emph{nonnegative rectangle lower triangular} matrix if $A_{i,j} = 0$ for all $i>j$.  
		\begin{lemma}
			\label{lem:Triagle-rank}
			Let $A \in \RR_{+}^{m\times n}$ be a rectangle lower triagular nonnegative matrix. Define $V = \{ i \leq \min \{m,n\}: A_{i,i} \neq 0 \}$, then $\asympsubnonrank(A) \geq |V|$. 
		\end{lemma}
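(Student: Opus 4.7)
The plan is to reduce to the square case of size $|V|$ and exhibit a large induced matching in each Kronecker power by copying antichains of the product poset $[|V|]^n$ onto the diagonal; Theorem~\ref{thm:nonnegative_subrank} then turns these matchings into a lower bound on $\subnonrank(A^{\otimes n})$ strong enough to conclude.

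First I would reduce to the square case. Writing $V = \{i_1 < \cdots < i_k\}$ with $k = |V|$, set $B \in \RR_+^{k \times k}$ to be the principal submatrix $B_{a,b} \coloneqq A_{i_a, i_b}$. Taking the $\{0,1\}$-valued selection matrices $X \in \RR_+^{k \times m}$ and $Y \in \RR_+^{k \times n}$ whose rows pick out $i_1, \ldots, i_k$ in $\RR^m$ and $\RR^n$, one has $B = X A Y^\top$, hence $B \leqk A$. Since $\leqk$ is compatible with $\otimes$ and $\subnonrank$ is monotone under $\leqk$, this forces $\subnonrank(B^{\otimes n}) \leq \subnonrank(A^{\otimes n})$ for every $n$, and therefore $\asympsubnonrank(B) \leq \asympsubnonrank(A)$. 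Moreover $B$ inherits the triangular pattern ($B_{a,b} = 0$ for $a > b$) and has strictly positive diagonal, so it suffices to show $\asympsubnonrank(B) \geq k$.

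Next I would construct the induced matching inside $B^{\otimes n}$. Index rows and columns of $B^{\otimes n}$ by $[k]^n$, so that $(B^{\otimes n})_{s, t} = \prod_{\ell} B_{s_\ell, t_\ell}$. Since $B$ has positive diagonal, every diagonal entry $(B^{\otimes n})_{s,s}$ is positive; and since $B$ vanishes whenever its row index exceeds its column index, $(B^{\otimes n})_{s, t} = 0$ as soon as some coordinate satisfies $s_\ell > t_\ell$. Consequently, for any antichain $\mathcal{A} \subseteq [k]^n$ in the coordinate-wise order, the family $\{(s, s) : s \in \mathcal{A}\}$ is an induced matching of $B^{\otimes n}$: each pair lies in the support, and the incomparability of two distinct $s, s' \in \mathcal{A}$ produces coordinates $\ell, \ell'$ with $s_\ell > s'_\ell$ and $s'_{\ell'} > s_{\ell'}$, making $(B^{\otimes n})_{s, s'} = (B^{\otimes n})_{s', s} = 0$ as required by total pairwise independence.

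Finally I would exhibit a sufficiently large antichain. The level set $L_r \coloneqq \{s \in [k]^n : s_1 + \cdots + s_n = r\}$ is an antichain for every $r$, because distinct elements with equal coordinate sum cannot be pointwise comparable. These sets partition $[k]^n$ into at most $kn$ blocks, so by pigeonhole some $r^{\ast}$ satisfies $|L_{r^{\ast}}| \geq k^{n-1}/n$. Combining this with Theorem~\ref{thm:nonnegative_subrank} and the previous paragraph yields
\[
\subnonrank(B^{\otimes n}) = \gamma(B^{\otimes n}) \geq |L_{r^{\ast}}| \geq \frac{k^{n-1}}{n},
\]
so that $\asympsubnonrank(B) \geq \lim_{n \to \infty} (k^{n-1}/n)^{1/n} = k$, and the reduction step above completes the proof. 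The only conceptually nonroutine point is the observation that it is already enough to place a large antichain on the diagonal of $B^{\otimes n}$; once this is noticed, the remaining ingredients (monotonicity of $\subnonrank$, pigeonhole on level sets, and Fekete's limit) are entirely standard.
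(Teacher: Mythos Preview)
Your proof is correct and follows essentially the same approach as the paper: both arguments place a large antichain of the product poset $[k]^n$ on the diagonal of the Kronecker power and invoke Theorem~\ref{thm:nonnegative_subrank}. The only cosmetic differences are that you first pass to the $|V|\times|V|$ submatrix (so the diagonal is everywhere positive) and bound the largest level set by pigeonhole, whereas the paper keeps the full $\min(m,n)$ square and bounds the central type class $\binom{N}{N/d,\dots,N/d}$ directly.
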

		\begin{proof}
			Let $d = |V|$ and let $N$ be any multiple of $d$. Our proof is constructive, i.e, for any $N$ multiple of $d$, we can find a \emph{diagonal} submatrix of $A^{\otimes N}$ whose coefficients on the diagonal are nonzero of size $d_N$ such that $\lim_{N \to \infty} d_N^{1/N} = d$. Our argument is valid since Proposition \ref{thm:nonnegative_subrank} assure that $\subnonrank(A^{\otimes N}) \geq d_N$. 
			
			We argue that one can assume that the matrix $A$ is square $A$, i.e. $m = n$. Indeed, if $A$ is not square, one can take the principle submatrix $B$ of size $p \times p$ of the matrix $A$ where $p = \min(m,n)$ and prove that $\asympsubnonrank(B) \geq |V|$. Since $\asympsubnonrank(A) \geq \asympsubnonrank(B)$ (by Proposition \ref{thm:nonnegative_subrank}), we have: $\asympsubnonrank(A) \geq |V|$.
			
			Hence, in the remaining proof, we assume $m = n$ and the size of the matrix $A^{\otimes N}$ is thus $n^N \times n^N$. In this proof, we will use $\cI := \{0, \ldots, n^N - 1\}$ as the set of row (or column) index of $A^{\otimes N}$. It is convenient to represent each element of $\cI$ in $n$-base (or radix), i.e, given $i \in \cI_R$, we write $i = (i_1, \ldots, i_N), 0 \leq i \leq n - 1$ to implies: $i = \sum_{k = 1}^{N} i_k n^{k - 1}$. This representation is useful since the coefficient at the index $(i,j)$ of $A^{\otimes N}$ has the form:
			\begin{equation}
				\label{eq:productrep}
				A^{\otimes N}_{i,j} = \prod_{k = 1}^N A_{i_k, j_k}.
			\end{equation}
			
			Define $S(N)$ the subset of $\supp(A^{\otimes N})$ \emph{on the diagonal} such that for all $(i,i) \in S(N), i = (i_1,\dots,i_N) \in S(N)$, we have:
			\begin{align*}
				S(N) \coloneqq \{(i,i) \in \supp(A^{\otimes N}) \mid |\{k \in [N]: i_k = p\}| = N/d \text{ for all } p \in V \}
			\end{align*} 
			In words, if $(i,i) \in S(N)$, then the $n$-basis representation of $i$ contains exactly $N/d$ digits $p$ for every $p \in V$. The size of $S(N)$ is thus,
			\begin{equation*}
				|S(N)| = \binom{N}{N/d,\dots,N/d} \geq \frac{d^N}{(N+1)^{d}} \, .
			\end{equation*}
			The inequality follows from the fact that the largest multinomial coefficient is the central one, that is, $\binom{N}{a_1,\dots,a_d} \leq \binom{N}{\frac{N}{d},\dots, \frac{N}{d}}$ and the number of possible partitions of $N$ into $d$ parts is at most $(N+1)^{d}$. This proves $\asympsubnonrank(A) \geq \lim_{N \to \infty} |S(N)|^{1/N} = d$. Thus, it is sufficient to show that the \emph{principle} submatrix of $A$ restricted to indices $S(N)$ is a  diagonal matrix with nonzero coefficients in its diagonal.
			
			Consider two indices $i,j \in S(N)$  with $i = (i_1, \ldots, i_N), j = (j_1, \ldots, j_N)$ and $i \neq j$. Since $\sum_{k = 1}^N i_k = \sum_{k = 1}^N y_k = (N/d) \sum_{p \in V} p$, there must exist an index $\ell \leq N$ such that $i_\ell > j_\ell$. By Equation \ref{eq:productrep}, we deduce that $A_{i,j} = 0$ (since $A_{i_\ell, j_\ell} = 0$). In addition, $A_{i,i} = \prod_{k = 1}^N A_{i_k, i_k} \neq 0, \forall i \in S(N)$ because $i_k \in V$. That shows elements on the diagonal are nonzero. That concludes the proof.
		\end{proof}

		Denote $T \subseteq \M_{+}$ is a subsemiring which generated by all lower triangular matrices with nonzero entries on diagonal under operation $\oplus$ and $\otimes$. From Lemma~\ref{lem:Triagle-rank}, the asymptotic spectrum $\mathbf{X}(T,\leqk)$ contains only a single point. It is a corollary of Lemma \ref{lem:Triagle-rank}.
		
		\begin{corollary}
			Let $A \in \RR_{+}^{m\times n}$ be a lower rectangular triangular nonnegative matrix with nonzero on its diagonal. Then $\phi(A) = \min(m,n)$ for any $\phi \in \mathbf{X}(\M_{+},\leqk)$.  
		\end{corollary}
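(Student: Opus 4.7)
The plan is to sandwich $\phi(A)$ between $\min(m,n)$ from below and $\min(m,n)$ from above, using Theorem~\ref{thm:duality} for the lower bound and the elementary inequality $\nrank(A) \leq \min(m,n)$ for the upper bound. Since the matrix $A$ has nonzero diagonal entries at every position $(i,i)$ for $i \leq \min(m,n)$, the set $V$ in Lemma~\ref{lem:Triagle-rank} satisfies $|V| = \min(m,n)$, so all the heavy combinatorial work for this corollary has already been carried out.

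For the upper bound, I would note that any nonnegative matrix $B \in \RR_+^{m \times n}$ trivially admits a nonnegative factorization $B = \id_m \cdot B$ (or $B = B \cdot \id_n$), hence $\nrank(B) \leq \min(m,n)$. Translating this into the preorder $\leqk$, one has $A \leqk \id_{\min(m,n)}$ (with the factorization witnessed by taking one factor to be $A$ itself and the other to be the appropriate identity, padded if necessary). Applying any $\phi \in \mathbf{X}(\M_{+},\leqk)$ and using the two defining properties $\phi(\id_r) = r$ and monotonicity yields
\begin{equation*}
\phi(A) \;\leq\; \phi(\id_{\min(m,n)}) \;=\; \min(m,n).
\end{equation*}

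For the lower bound, Lemma~\ref{lem:Triagle-rank} gives $\asympsubnonrank(A) \geq |V| = \min(m,n)$. By Theorem~\ref{thm:duality}, $\asympsubnonrank(A) = \min_{\phi \in \mathbf{X}(\M_+,\leqk)} \phi(A)$, so in particular every individual $\phi$ satisfies
\begin{equation*}
\phi(A) \;\geq\; \asympsubnonrank(A) \;\geq\; \min(m,n).
\end{equation*}
Combining the two bounds gives $\phi(A) = \min(m,n)$ for every $\phi \in \mathbf{X}(\M_+,\leqk)$.

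There is no real obstacle here: the result is a direct corollary of Lemma~\ref{lem:Triagle-rank} together with the duality in Theorem~\ref{thm:duality}. The only minor care I would take is to verify the $\leqk$-relation $A \leqk \id_{\min(m,n)}$ explicitly (treating the cases $m \leq n$ and $m > n$ symmetrically), and to observe that $A$ being non-zero means the application of Theorem~\ref{thm:duality} is legitimate without further qualification.
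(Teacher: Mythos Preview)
Your proposal is correct and follows essentially the same approach as the paper's proof: both sandwich $\phi(A)$ using Lemma~\ref{lem:Triagle-rank} for the lower bound via $\asympsubnonrank(A)\ge\min(m,n)$ and the trivial bound $\nrank(A)\le\min(m,n)$ for the upper bound, then invoke the duality characterization. Your upper bound is in fact slightly more direct than the paper's, since you use monotonicity of $\phi$ with $A\leqk\id_{\min(m,n)}$ immediately, whereas the paper passes through $\asynrank(A)\le\min(m,n)=\max_\phi\phi(A)$; the content is the same.
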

		\begin{proof}
			Let $d = \min(m,n)$, since $A\in \RR_{+}^{m \times n}$ with nonzero on its diagonal, it implies $\rank(A) = \nrank(A) = d$, therefore $\asynrank(A) \leq d$. By Lemma~\ref{lem:Triagle-rank} one has $\asympsubnonrank(A) \geq d$. Using Theorem \ref{thm:Strassen_asymptotic} (which states $\asynrank(A) = \max_{\phi \in \mathbf{X}(\M_{+},\leqk)} \phi(A) \geq \min_{\phi \in \mathbf{X}(\M_{+},\leqk)} \phi(A) = \asympsubnonrank(A)$), we have $\asynrank(A) = \asympsubnonrank(A) = d$ and $\phi(A) = d, \forall \phi \in \mathbf{X}(\M_{+},\leqk)$. This proves the corollary. 
		\end{proof}

		\paragraph{Acknowledgments}  HT would like to thank Jeroen Zuiddam for the useful discussions about the theory of asymptotic spectra and asymptotic nonnegative rank. We also thank Yaroslav Shitov and Vincent Y. F. Tan for the useful discussions about the non-multiplicativity of nonnegative rank and the exact Rényi common information.

\newpage
\phantomsection
\addcontentsline{toc}{section}{Bibliography}
\bibliographystyle{alphaurl}
\bibliography{bibliofile}

\newcommand{\etalchar}[1]{$^{#1}$}
\begin{thebibliography}{VGGT16}

\bibitem[BCS97]{peterbook}
Peter Bürgisser, Michael Clausen, and Amin Shokrollahi.
\newblock {\em Algebraic Complexity Theory}, volume 315.
\newblock 01 1997.
\newblock URL: \url{https://link.springer.com/book/10.1007/978-3-662-03338-8},
  \href {https://doi.org/10.1007/978-3-662-03338-8}
  {\path{doi:10.1007/978-3-662-03338-8}}.

\bibitem[BDR22]{blatter2022countably}
Andreas Blatter, Jan Draisma, and Filip Rupniewski.
\newblock Countably many asymptotic tensor ranks.
\newblock {\em arXiv preprint arXiv:2212.12219}, 2022.

\bibitem[BJLP17]{braun2017information}
G{\'a}bor Braun, Rahul Jain, Troy Lee, and Sebastian Pokutta.
\newblock Information-theoretic approximations of the nonnegative rank.
\newblock {\em computational complexity}, 26:147--197, 2017.

\bibitem[BKLT13]{beasley2013communication}
LeRoy~B Beasley, Hartmut Klauck, Troy Lee, and Dirk~Oliver Theis.
\newblock Communication complexity, linear optimization, and lower bounds for
  the nonnegative rank of matrices (dagstuhl seminar 13082).
\newblock In {\em Dagstuhl Reports}, volume~3. Schloss Dagstuhl-Leibniz-Zentrum
  fuer Informatik, 2013.

\bibitem[Cam89]{cameron1989induced}
Kathie Cameron.
\newblock Induced matchings.
\newblock {\em Discrete Applied Mathematics}, 24(1-3):97--102, 1989.
\newblock URL:
  \url{https://www.sciencedirect.com/science/article/pii/0166218X9290275F}.

\bibitem[CFTZ21]{christandl2021symmetric}
Matthias Christandl, Omar Fawzi, Hoang Ta, and Jeroen Zuiddam.
\newblock Symmetric subrank of tensors and applications.
\newblock {\em arXiv preprint arXiv:2104.01130}, 2021.

\bibitem[CFTZ22]{christandl_et_al:LIPIcs.ITCS.2022.48}
Matthias Christandl, Omar Fawzi, Hoang Ta, and Jeroen Zuiddam.
\newblock {Larger Corner-Free Sets from Combinatorial Degenerations}.
\newblock In Mark Braverman, editor, {\em 13th Innovations in Theoretical
  Computer Science Conference (ITCS 2022)}, volume 215 of {\em Leibniz
  International Proceedings in Informatics (LIPIcs)}, pages 48:1--48:20,
  Dagstuhl, Germany, 2022. Schloss Dagstuhl -- Leibniz-Zentrum f{\"u}r
  Informatik.
\newblock URL: \url{https://drops.dagstuhl.de/opus/volltexte/2022/15644}, \href
  {https://doi.org/10.4230/LIPIcs.ITCS.2022.48}
  {\path{doi:10.4230/LIPIcs.ITCS.2022.48}}.

\bibitem[CGZ22]{christandl2022gap}
Matthias Christandl, Fulvio Gesmundo, and Jeroen Zuiddam.
\newblock A gap in the subrank of tensors.
\newblock {\em arXiv preprint arXiv:2212.01668}, 2022.

\bibitem[CVZ18]{christandl2018universal}
Matthias Christandl, P{\'e}ter Vrana, and Jeroen Zuiddam.
\newblock Universal points in the asymptotic spectrum of tensors.
\newblock In {\em Proceedings of the 50th Annual ACM SIGACT Symposium on Theory
  of Computing}, pages 289--296, 2018.

\bibitem[Fek23]{fekete1923verteilung}
Michael Fekete.
\newblock {\"U}ber die verteilung der wurzeln bei gewissen algebraischen
  gleichungen mit ganzzahligen koeffizienten.
\newblock {\em Mathematische Zeitschrift}, 17(1):228--249, 1923.

\bibitem[FFHT23]{farooq2023asymptotic}
Muhammad~Usman Farooq, Tobias Fritz, Erkka Haapasalo, and Marco Tomamichel.
\newblock Asymptotic and catalytic matrix majorization.
\newblock {\em arXiv preprint arXiv:2301.07353}, 2023.

\bibitem[FKNN95]{feder1995amortized}
Tom{\'a}s Feder, Eyal Kushilevitz, Moni Naor, and Noam Nisan.
\newblock Amortized communication complexity.
\newblock {\em SIAM Journal on computing}, 24(4):736--750, 1995.
\newblock URL: \url{https://epubs.siam.org/doi/10.1137/S0097539792235864}.

\bibitem[Fri21]{fritz2021generalization}
Tobias Fritz.
\newblock A generalization of strassen’s positivstellensatz.
\newblock {\em Communications in Algebra}, 49(2):482--499, 2021.
\newblock URL:
  \url{https://www.tandfonline.com/doi/abs/10.1080/00927872.2020.1803344}.

\bibitem[GGL22]{gao2022tracial}
Li~Gao, Sander Gribling, and Yinan Li.
\newblock On a tracial version of haemers bound.
\newblock {\em IEEE Transactions on Information Theory}, 68(10):6585--6604,
  2022.
\newblock URL: \url{https://ieeexplore.ieee.org/document/9779745}.

\bibitem[GL20]{gribling2020haemers}
Sander Gribling and Yinan Li.
\newblock The haemers bound of noncommutative graphs.
\newblock {\em IEEE Journal on Selected Areas in Information Theory},
  1(2):424--431, 2020.
\newblock URL: \url{https://ieeexplore.ieee.org/document/9152992}.

\bibitem[GZ23]{gesmundo2023next}
Fulvio Gesmundo and Jeroen Zuiddam.
\newblock The next gap in the subrank of 3-tensors.
\newblock {\em arXiv preprint arXiv:2307.06115}, 2023.

\bibitem[JSWZ13]{jain2013efficient}
Rahul Jain, Yaoyun Shi, Zhaohui Wei, and Shengyu Zhang.
\newblock Efficient protocols for generating bipartite classical distributions
  and quantum states.
\newblock {\em IEEE Transactions on Information Theory}, 59(8):5171--5178,
  2013.
\newblock URL: \url{https://epubs.siam.org/doi/10.1137/1.9781611973105.108}.

\bibitem[JV19]{jensen2019asymptotic}
Asger~Kj{\ae}rulff Jensen and P{\'e}ter Vrana.
\newblock The asymptotic spectrum of locc transformations.
\newblock {\em IEEE Transactions on Information Theory}, 66(1):155--166, 2019.
\newblock URL: \url{https://ieeexplore.ieee.org/document/8764399}.

\bibitem[KKN95]{karchmer1995fractional}
Mauricio Karchmer, Eyal Kushilevitz, and Noam Nisan.
\newblock Fractional covers and communication complexity.
\newblock {\em SIAM Journal on Discrete Mathematics}, 8(1):76--92, 1995.
\newblock URL: \url{https://epubs.siam.org/doi/10.1137/S0895480192238482}.

\bibitem[Kus97]{kushilevitz1997communication}
Eyal Kushilevitz.
\newblock Communication complexity.
\newblock In {\em Advances in Computers}, volume~44, pages 331--360. Elsevier,
  1997.
\newblock URL:
  \url{https://www.cambridge.org/core/books/communication-complexity/427E022FCBAC3FB5CEE4D39008D1E118}.

\bibitem[Lov75]{LOVASZ1975383}
L.~Lovász.
\newblock On the ratio of optimal integral and fractional covers.
\newblock {\em Discrete Mathematics}, 13(4):383--390, 1975.
\newblock URL:
  \url{https://www.sciencedirect.com/science/article/pii/0012365X75900588},
  \href {https://doi.org/10.1016/0012-365X(75)90058-8}
  {\path{doi:10.1016/0012-365X(75)90058-8}}.

\bibitem[Lov79]{lovasz1979shannon}
L{\'a}szl{\'o} Lov{\'a}sz.
\newblock On the shannon capacity of a graph.
\newblock {\em IEEE Transactions on Information theory}, 25(1):1--7, 1979.
\newblock URL: \url{https://ieeexplore.ieee.org/document/1055985}.

\bibitem[Lov90]{lovasz1990communication}
L{\'a}szl{\'o} Lov{\'a}sz.
\newblock Communication complexity: A survey.
\newblock {\em Paths, Flows, and VLSI-Layout}, page 235, 1990.
\newblock URL: \url{https://www.cs.princeton.edu/research/techreps/TR-204-89}.

\bibitem[LS{\etalchar{+}}09]{lee2009lower}
Troy Lee, Adi Shraibman, et~al.
\newblock Lower bounds in communication complexity.
\newblock {\em Foundations and Trends{\textregistered} in Theoretical Computer
  Science}, 3(4):263--399, 2009.
\newblock URL:
  \url{https://www.cs.toronto.edu/~toni/Courses/CommComplexity/Papers/adi-book.pdf}.

\bibitem[LZ20]{li2020quantum}
Yinan Li and Jeroen Zuiddam.
\newblock Quantum asymptotic spectra of graphs and non-commutative graphs, and
  quantum shannon capacities.
\newblock {\em IEEE Transactions on Information Theory}, 67(1):416--432, 2020.
\newblock URL: \url{https://ieeexplore.ieee.org/document/9235556}.

\bibitem[RY20]{rao2020communication}
Anup Rao and Amir Yehudayoff.
\newblock {\em Communication Complexity: and Applications}.
\newblock Cambridge University Press, 2020.
\newblock URL:
  \url{https://www.cambridge.org/core/books/communication-complexity/5F44993E3B2597174B71D3F21E748443}.

\bibitem[RZ22]{robere2022amortized}
Robert Robere and Jeroen Zuiddam.
\newblock Amortized circuit complexity, formal complexity measures, and
  catalytic algorithms.
\newblock In {\em 2021 IEEE 62nd Annual Symposium on Foundations of Computer
  Science (FOCS)}, pages 759--769. IEEE, 2022.
\newblock URL: \url{https://ieeexplore.ieee.org/document/9719840}.

\bibitem[Str88]{strassen1988asymptotic}
Volker Strassen.
\newblock The asymptotic spectrum of tensors.
\newblock {\em J. Reine Angew. Math}, 384:102--152, 1988.
\newblock URL: \url{https://eudml.org/doc/153001}.

\bibitem[Str91]{Str91}
V.~Strassen.
\newblock Degeneration and complexity of bilinear maps: Some asymptotic
  spectra.
\newblock {\em Journal für die reine und angewandte Mathematik}, 413:127--180,
  1991.
\newblock URL: \url{http://eudml.org/doc/153292}.

\bibitem[SU11]{scheinerman2011fractional}
Edward~R Scheinerman and Daniel~H Ullman.
\newblock {\em Fractional graph theory: a rational approach to the theory of
  graphs}.
\newblock Courier Corporation, 2011.
\newblock URL:
  \url{https://www.ams.jhu.edu/ers/wp-content/uploads/2015/12/fgt.pdf}.

\bibitem[VGGT16]{vandaele2016heuristics}
Arnaud Vandaele, Nicolas Gillis, Fran{\c{c}}ois Glineur, and Daniel Tuyttens.
\newblock Heuristics for exact nonnegative matrix factorization.
\newblock {\em Journal of Global Optimization}, 65:369--400, 2016.
\newblock URL:
  \url{https://link.springer.com/article/10.1007/s10898-015-0350-z}.

\bibitem[Vra22]{Peter_vara_generlized}
Péter Vrana.
\newblock A generalization of strassen’s theorem on preordered semirings.
\newblock {\em Order}, 39(2):209--228, 2022.
\newblock URL:
  \url{https://link.springer.com/article/10.1007/s11083-021-09570-7}.

\bibitem[Wyn75]{wyner1975common}
Aaron Wyner.
\newblock The common information of two dependent random variables.
\newblock {\em IEEE Transactions on Information Theory}, 21(2):163--179, 1975.

\bibitem[WZ22]{wigderson2022asymptotic}
Avi Wigderson and Jeroen Zuiddam.
\newblock Asymptotic spectra: Theory, applications and extensions, 2022.
\newblock URL: \url{https://staff.fnwi.uva.nl/j.zuiddam/papers/convexity.pdf}.

\bibitem[Yan91]{yannakakis1988expressing}
Mihalis Yannakakis.
\newblock Expressing combinatorial optimization problems by linear programs.
\newblock {\em Journal of Computer and System Sciences}, 43(3):441--466, 1991.
\newblock URL:
  \url{https://www.sciencedirect.com/science/article/pii/002200009190024Y},
  \href {https://doi.org/10.1016/0022-0000(91)90024-Y}
  {\path{doi:10.1016/0022-0000(91)90024-Y}}.

\bibitem[YT20]{yu2020exact}
Lei Yu and Vincent~YF Tan.
\newblock On exact and {$\infty$}-r{\'e}nyi common informations.
\newblock {\em IEEE Transactions on Information Theory}, 66(6):3366--3406,
  2020.

\bibitem[YT22]{CIT-122}
Lei Yu and Vincent Y.~F. Tan.
\newblock Common information, noise stability, and their extensions.
\newblock {\em Foundations and Trends® in Communications and Information
  Theory}, 19(2):107--389, 2022.
\newblock URL: \url{http://dx.doi.org/10.1561/0100000122}, \href
  {https://doi.org/10.1561/0100000122} {\path{doi:10.1561/0100000122}}.

\bibitem[Zha12]{zhang2012quantum}
Shengyu Zhang.
\newblock Quantum strategic game theory.
\newblock In {\em Proceedings of the 3rd Innovations in Theoretical Computer
  Science Conference}, pages 39--59, 2012.
\newblock URL: \url{https://dl.acm.org/doi/10.1145/2090236.2090241}.

\bibitem[Zui18]{zuiddam2018algebraic}
Jeroen Zuiddam.
\newblock {\em Algebraic complexity, asymptotic spectra and entanglement
  polytopes}.
\newblock University of Amsterdam, 2018.
\newblock URL: \url{https://staff.fnwi.uva.nl/j.zuiddam/diss/diss_v2.pdf}.

\bibitem[Zui19]{zuiddam2019asymptotic_graph}
Jeroen Zuiddam.
\newblock The asymptotic spectrum of graphs and the shannon capacity.
\newblock {\em Combinatorica}, 39(5):1173--1184, 2019.
\newblock URL:
  \url{https://link.springer.com/article/10.1007/s00493-019-3992-5}.

\end{thebibliography}
\newpage 
 \appendix
 \section{Proof of technical lemmas}
 \label{appendix:techlem}
 \begin{theorem}
 	For every pair of nonnegative matrices $(X,Y)$, we have:
 	\begin{equation*}
 		\nrank(X \oplus Y) = \nrank(X) + \nrank(Y)
 	\end{equation*}
 \end{theorem}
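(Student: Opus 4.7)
The inequality $\nrank(X \oplus Y) \leq \nrank(X) + \nrank(Y)$ is the easy direction. If $X = \sum_{i=1}^{r_X} a_i b_i^\top$ and $Y = \sum_{j=1}^{r_Y} c_j d_j^\top$ are rank-one nonnegative decompositions achieving the nonnegative rank, then the padded rank-one matrices $\binom{a_i}{0}\binom{b_i}{0}^\top$ and $\binom{0}{c_j}\binom{0}{d_j}^\top$ sum to $X \oplus Y$, giving a nonnegative decomposition of length $r_X + r_Y$. I would dispense with this in one sentence.

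The content is in the converse inequality $\nrank(X \oplus Y) \geq \nrank(X) + \nrank(Y)$. Let $r = \nrank(X \oplus Y)$ and fix a nonnegative decomposition $X \oplus Y = \sum_{k=1}^r u_k v_k^\top$. Writing the vectors block-wise as $u_k = (u_k^X, u_k^Y)$ and $v_k = (v_k^X, v_k^Y)$ according to the row/column partition defining the direct sum, the four blocks of $X \oplus Y$ give the four identities
\begin{equation*}
\sum_k u_k^X (v_k^X)^\top = X, \quad \sum_k u_k^Y (v_k^Y)^\top = Y, \quad \sum_k u_k^X (v_k^Y)^\top = 0, \quad \sum_k u_k^Y (v_k^X)^\top = 0.
\end{equation*}
The key observation, and the only real use of nonnegativity, is that the last two sums consist of termwise nonnegative matrices, so each individual outer product must vanish: for every $k$ we have $u_k^X (v_k^Y)^\top = 0$ and $u_k^Y (v_k^X)^\top = 0$.

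From here I would do a four-case analysis on which of $u_k^X, u_k^Y$ is nonzero. If both are nonzero, then $v_k^X = v_k^Y = 0$, so $u_k v_k^\top = 0$ and the term is vacuous. If $u_k^X \neq 0$ and $u_k^Y = 0$, then $v_k^Y = 0$ and the term contributes only to the top-left block, i.e.\ to $X$. Symmetrically, if $u_k^X = 0$ and $u_k^Y \neq 0$ the term contributes only to $Y$. The remaining case $u_k = 0$ is again vacuous. Partitioning the nonvacuous indices into the two contributing classes of sizes $r_X$ and $r_Y$, we obtain nonnegative decompositions of $X$ and $Y$ of respective lengths $r_X$ and $r_Y$, so $\nrank(X) + \nrank(Y) \leq r_X + r_Y \leq r$, finishing the proof.

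I do not expect any genuine obstacle: the entire argument hinges on the one elementary fact that a sum of nonnegative rank-one matrices equals zero only if each summand is zero, which forces a clean separation of the rank-one terms into an $X$-part and a $Y$-part. The only care needed is to track the four blocks correctly and to dispose of the degenerate cases where $u_k$ or $v_k$ is zero.
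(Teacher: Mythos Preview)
Your proposal is correct and follows essentially the same approach as the paper: both split each rank-one term according to the block structure, use nonnegativity of the off-diagonal blocks to force each term to live in a single diagonal block, and then partition the terms into an $X$-part and a $Y$-part. If anything, your four-case analysis handles the degenerate cases (where $u_k$ or $v_k$ vanishes, or both halves of $u_k$ are nonzero) more explicitly than the paper, which simply asserts the dichotomy.
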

 \begin{proof}
 	Let $m = \nrank(X)$ and $n = \nrank(Y)$. By definition, there exist sets of pairs of nonnegative vectors: $\{(a_i,b_i) \mid i = 1, \ldots, m\}$ and $\{(c_i, d_i) \mid i = 1, \ldots, n\}$ such that:
 	\begin{equation}
 		\begin{aligned}
 			X & = \sum_{i = 1}^m a_ib_i^\top, \qquad
 			Y & = \sum_{i = 1}^n c_id_i^\top
 		\end{aligned}
 		\label{eq:nrankdef}
 	\end{equation}
 	
 	To prove the equality, it is sufficient to show that $\nrank(X \oplus Y) \leq m + n$ and $\nrank(X \oplus Y) \geq m + n$. Both inequalities are shown below:
 	\begin{enumerate}
 		\item Proof of $\nrank(X \oplus Y) \leq m + n$: This claim is trivial because:
 		\begin{equation*}
 			X \oplus Y = \begin{pmatrix}
 				X & 0\\
 				0 & Y
 			\end{pmatrix} = \sum_{i = 1}^m \bar{a}_i\bar{b}_i^\top + \sum_{i = 1}^n \bar{c}_i\bar{d}_i^\top
 		\end{equation*}
 		where $\bar{a}_i = \left[\begin{smallmatrix}
 			a_i \\ 0
 		\end{smallmatrix}\right]$ and $b_i = \left[\begin{smallmatrix}
 			b_i \\ 0
 		\end{smallmatrix}\right], \forall i = 1, \ldots, m$, the concatenation of vectors $a_i, b_i$ with a zero vector respectively. Similarly, $\bar{c}_i = \left[\begin{smallmatrix}
 			0 \\ c_i
 		\end{smallmatrix}\right]$ and $\bar{d}_i = \left[\begin{smallmatrix}
 			0 \\ d_i
 		\end{smallmatrix}\right], \forall i = 1, \ldots, n$. Since $\bar{a}_i, \bar{b}_i, \bar{c}_i, \bar{d}_i$ are nonnegative by construction, we have: $\nrank(X \oplus Y) \leq m + n$.
 		
 		\item Proof of $\nrank(X \oplus Y) \geq m + n$: Assume that $\nrank(X \oplus Y) = \ell$. By definition of nonnegative rank, there exists a set of pairs of nonnegative vectors $(x_i, y_i)$ such that:
 		\begin{equation}
 			\label{eq:nrankdef2}
 			X \oplus Y = \sum_{i = 1}^\ell x_iy_i^\top
 		\end{equation}
 		Note that the ``top right'' and the ``bottom left'' of $X \oplus Y$ are zero matrices, we must have either: $\supp(x_iy_i^\top) \subseteq \supp(\left[\begin{smallmatrix}
 			X & 0\\
 			0 & 0
 		\end{smallmatrix}\right])$ or $\supp(x_iy_i^\top) \subseteq \supp(\left[\begin{smallmatrix}
 			0 & 0\\
 			0 & Y
 		\end{smallmatrix}\right])$. As a consequence, each pair $(x_i, y_i)$ can be written either:
 		\begin{align}
 			(x_i, y_i) &= \left(\begin{bmatrix}
 				\bar{x}_i \\ 0
 			\end{bmatrix}, \begin{bmatrix}
 				\bar{y}_i \\ 0
 			\end{bmatrix}\right) \label{eq:form1}\\ \text{ or } \quad (x_i, y_i) &= \left(\begin{bmatrix}
 				0 \\ \bar{x}_i
 			\end{bmatrix}, \begin{bmatrix}
 				0 \\ \bar{y}_i
 			\end{bmatrix}\right) \label{eq:form2}
 		\end{align}
 		where $(\bar{x}_i, \bar{y}_i)$ are nonnegative vectors.
 		
 		One can thus partition the set $\{1, \ldots, \ell\}$ into two sets $\cX$ and $\cY$ such that $i \in \cX$ (resp. $i \in \cY$) if $(x_i,y_i)$ has the form in Equation~\eqref{eq:form1} (resp. Equation~\eqref{eq:form2}). To make Equation~\eqref{eq:nrankdef2}, it is necessary that:
 		\begin{equation*}
 			X = \sum_{i \in \cX} \bar{x}_i\bar{y}_i^\top, \qquad Y = \sum_{i \in \cY} \bar{x}_i\bar{y}_i^\top
 		\end{equation*}
 		By definition of nonnegative rank, we have: $|\cX| \geq m, |\cY| \geq n$. Thus, $\ell = |\cX| + |\cY| = m + n$. This concludes the proof. \qedhere
 	\end{enumerate}
 \end{proof}
\end{document}